\newtheorem{lemma}{Lemma}
\newtheorem{theorem}{Theorem}
\newtheorem{assumption}{Assumption}
\newtheorem{proposition}{Proposition}
\newtheorem{definition}{Definition}
\newtheorem{example}{Example}
\begin{document}
\markboth{Yukihiro Tsuzuki}{Pitman's Theorem, Black--Scholes Equation, and Derivative Pricing for Fundraisers}

\title{Pitman's Theorem, Black--Scholes Equation, and Derivative Pricing for Fundraisers\thanks{We would like to thank Professors Yoshiki Otobe and Bin Xie for their helpful discussions.
This work was supported by JSPS KAKENHI, Grant Number JP19K13737.
}}
\author{Yukihiro Tsuzuki\thanks{Faculty of Economics and Law, Shinshu University, 3-1-1 Asahi, Matsumoto, Nagano 390-8621, Japan; yukihirotsuzuki@shinshu-u.ac.jp}}

\maketitle

\abstract{We propose a financial market model that comprises a savings account and a stock.
The stock price process is modeled as a one-dimensional diffusion,
in which
two types of agents exist: an ordinary investor and a fundraiser who buys or sells stocks as funding activities.
Although the investor information is the natural filtration of the diffusion,
the fundraiser possesses additional information regarding the funding, as well as additional cash flows as a result of the funding.
This concept is modeled using Pitman's theorem for the three-dimensional Bessel process.

Two contributions are presented:
First, the prices of European options for the fundraiser are derived.
Second, a numerical scheme for call option prices in a market with a bubble is proposed,
where multiple solutions exist for the Black--Scholes equation
and the derivative prices are characterized as the smallest nonnegative supersolution.
More precisely, the call option price in such a market is approximated from below by the prices for the fundraiser.
This scheme overcomes the difficulty that stems from the discrepancy that the payoff shows linear growth, whereas the price function shows strictly sublinear growth.
}

{\bf Keywords: Pitman's theorem; Black--Scholes equation; local martingale; financial bubble; non-uniqueness of Cauchy problem; derivative price; funding; Schwarzian derivative} 

{\bf AMS Subject Classification:} 60J60, 60J65

\clearpage

\section{Introduction}
We propose a financial market model that comprises a savings account and a stock,
in which two types of agents are considered: an investor and {\it a fundraiser}.

Markets in which different agents have different information have been studied under {\it insider trading},
in which one agent, who is referred to as an insider, possesses extra information over the remaining agents (the reader is referred to, for example, \citet{Higa2004} and \citet{HigaLatorre2011} for surveys).
These markets are modeled using the enlargement technique of filtrations, particularly initial and progressive enlargements.
For a given filtration and random variable,
in initial enlargement, the information of the random variable is added to the original filtration at the initial time.
In progressive enlargement, where the random variable is assumed to be nonnegative, the filtration is enlarged such that the random variable is a stopping time (see, for example, \citet{mansuy2006random} and \citet{protter2005stochastic}).
Although these enlargements have been studied since the 1980s,
a new type of enlargement, namely {\it an enlargement with a stochastic process}, was recently presented by \citet{KchiaProtter2015}.
One example is the enlargement of the natural filtration of the three-dimensional Bessel process with its future infimum; that is, Pitman's theorem (see \citet{Pitman1975} for Pitman's theorem and \citet{SaishoTanemura1990}, \citet{Rauscher1997}, \citet{Takaoka1997}, \citet{mansuy2006random}, and \citet{aksamit2017enlargement} for the generalization thereof).

This study partly aligns with insider trading in the sense that the fundraiser possesses additional information regarding the funding over the investor, who is an ordinary trader.
Furthermore, the fundraiser is supposed to issue new stocks or buy back stocks.
As a result, the information available to the fundraiser is richer than that available to the investor, and the fundraiser has additional cash flows when he or she performs these operations.
This market is also modeled by the enlargement of a filtration through a stochastic process.
The stock price process is modeled as a one-dimensional diffusion under a physical measure
and the investor information is the natural filtration of the diffusion.
The fundraiser views the stock price with the filtration that is enlarged by the future infimum process of another latent diffusion
and is assumed to issue or buy back stocks when the infimum process changes.

More precisely, suppose that the stock price process is determined by $Y=f(X)$ for a differentiable function $f$ and diffusion $X$,
where $f$ is monotonic and $X$ is a transient diffusion with $\lim_{t\rightarrow \infty}X_{t} = \infty$.
Then, the future infimum process $J_{t} = \inf_{u > t} X_{u}$ is finite valued.
We assume that funding occurs when the process $J$ increases strictly.
In the case of $f^{\prime} < 0$, the fundraiser issues new stocks and raises money,
which results in a decrease in stock prices.
In the case of $f^{\prime} > 0$, he or she buys back stocks and spends money,
which results in an increase in stock prices.
If he or she reinvests all money that is raised in the stock, 
the wealth process is calculated as
\begin{eqnarray}
\tilde{Y}_{t} := Y_{t} \left(\frac{f(J_{0})}{f(J_{t})} \right)^{2}.
\end{eqnarray}
The market for the fundraiser is determined as though the stock price is provided by $\tilde{Y}$,
or equivalently, the stock pays dividends $-2df(J_{t})$ only to the fundraiser.
We refer to this market as the {\it hypothetical} market for the fundraiser.

Two contributions are presented, as follows:

First, the prices of European options on the stock for the fundraiser are derived and compared with those for the investor.
The stock price process is modeled using a physical measure and arbitrage opportunities are not excluded.
In such markets, in which no equivalent local martingale measure and only a square-integrable market price of risk exists,
\citet{ruf2012} demonstrated that delta hedging provides the optimal trading strategy in terms of the minimal required initial capital to replicate a given terminal payoff of a European option under the assumption that market models are continuous-time Markovian.
This method generalizes some results in the benchmark approach (\citet{platen2006benchmark}).

We apply the method of \citet{ruf2012} to our problem to obtain the following results:
there exists an equivalent martingale measure in the hypothetical market for the fundraiser; 
a European option price for the fundraiser, which is parameterized by $J_{0}$, namely the level of the stock where the first funding occurs, is higher or lower than that for the investor, depending on the payoff and sign of $f^{\prime}$;
and the fundraiser price converges to the investor price as $J_{0} \rightarrow 0$ under certain conditions.

Second, we propose a numerical scheme to solve the Black--Scholes equation for the prices of call options in a market with a bubble.
In this case, a bubble is defined as a stock price that is a strict local martingale (i.e., a local martingale but not a martingale) under a risk-neutral measure.
Bubbles have been studied extensively in mathematical finance (see, for example, \citet{CoxHobson2005},
\citet{HestonLoewensteinGregory2006},
\citet{JarrowProtterShimbo2007},
\citet{JarrowProtterShimbo2010},
and \citet{Fernholzkaratzas2010}).
Several standard results in derivative pricing theory have failed:
put-call parity does not hold;
the European call option price is neither convex in the stock price nor increasing in maturity;
the price of an American call exceeds that of a European call;
and solutions to Black--Scholes equations are non-unique,
which is investigated in this study. 
In general, derivative prices are solutions to the corresponding partial differential equations or Cauchy problems, which are referred to as Black--Scholes equations in the literature,
the existence and uniqueness of which have been studied in detail.
For example, \citet{HeathSchweizer2000} and \citet{JansonTysk2006} derived sufficient conditions for the stochastic solution to be the unique solution; in this case, the stochastic solution is defined as the conditional expectation of the payoff.
However, multiple solutions may exist in the presence of bubbles,
among which derivative prices are characterized as the smallest nonnegative supersolution (\citet{HestonLoewensteinGregory2006}).
\citet{Tysk2009} found that the uniqueness holds if a payoff function is of strictly sublinear growth,
and \citet{Cetin2018} established a new characterization of the stochastic solution in terms of the unique solution to an alternative Cauchy problem.
Although uniqueness is theoretically recovered, numerical issues remain for a payoff of linear growth.
\citet{EkstromLotstedtSydowTysk2011}
and
\citet{SongYang2015}
provided schemes to approximate the price and demonstrated that the approximate sequences converge.
However, specifying the boundary conditions at a spatial boundary, which is necessary when applying the finite difference method, is difficult.
This is because, as noted in Corollary 6.1 of \citet{Cetin2018}, 
the price is of strictly sublinear growth, whereas the payoff is of linear growth.
This problem is realized as a discontinuity at the corner of the terminal-boundary datum in numerical procedures, as outlined by \citet{SongYang2015}.

We propose an alternative approach that approximates the option price using the prices of the fundraiser to address the above problem.
These prices converge to the price of the investor as $J_{0}$ approaches zero, as mentioned previously,
and satisfy the same Black--Scholes equation with an additional boundary condition.
The newly required boundary values are continuously connected to the terminal condition, thereby overcoming the aforementioned difficulty,
and are calculated as averages of the prices of knock-out forward contracts.
The above approach was further developed in \citet{tsuzuki2024boundary}, in which an asymptotic result was obtained. See also \citet{tsuzuki2024boundary} for numerical experiments on the procedure in this study, as well as the numerical methods mentioned above.

Finally, we provide a technical remark on the expressions of the Girsanov density, which play an important role in this study.
Although this process is generally expressed using a stochastic integral and its quadratic variation,
it can also be represented by a multiplicative functional in the case of one-dimensional diffusion, in which the stochastic integral is replaced with a Markov process and an additive functional.
This form is more general and 
\citet{SaschaGuntherRogers2021} used it to handle cases in which the diffusion is not a solution to a stochastic differential equation, whereas
\citet{Cetin2018} proposed a new class of path transformation, namely a recurrent transformation, 
which extends the $h$-transform of Doob to cases in which $h$ is an absolutely continuous function and $h^{\prime}$ has a singular term with respect to the Lebesgue measure, resulting in the appearance of a singular term in the Girsanov density.
Although the additive functionals in these expressions are related to the generator of the diffusion,
we prefer the use of {\it the Schwarzian derivative},
which was introduced by Schwarz in his study of conformal maps,
to express the additive functional.
In this study, we refer to this process as the Schwarzian process.
This is because the Schwarzian derivative is invariant under a M\"{o}bius transform,
and
the relationship between the investor and fundraiser involves the correspondence of $s$ and $1/s$, where $s$ or $-s$ is the scale function of $X$.
Furthermore, the Schwarzian derivative is considered as a measure of the curvature and deviation from the best approximation by a M\"{o}bius transform (see Chapter 6 of \citet{beals2020explorations}).
The introduction of the Schwarzian process eases the computations, revealing the relationship between the investor and fundraiser more clearly.

In the following section, we briefly review the Schwarzian derivative, define the Schwarzian process, and use these processes to express the diffusion laws.
We describe the financial market for both the investor and fundraiser, in which the growth optimal portfolios (GOPs) are derived, in Section \ref{sec:market}.
In Section \ref{sec:derivative}, 
we derive European option prices for the fundraiser,
compare them with those of the investor,
and demonstrate that the former converges to the latter under certain conditions,
which are summarized in Theorem \ref{them:main} (the main theorem of this study).
In Section \ref{sec:bse}, we propose a numerical scheme for call option prices in markets with bubbles using the convergence result.
Finally, we provide various examples in Section \ref{sec:example}.

\section{Schwarzian derivative and Schwarzian process}
The Schwarzian derivative is ubiquitous
and tends to appear in seemingly unrelated fields of mathematics, including
classical complex analysis, differential equations, one-dimensional dynamics, Teichm\"{u}ller theory, integrable systems, and conformal field theory
(\citet{OvsienkoTabachnikov2009}); however,
it is rarely used in stochastic calculus.
In the following, we briefly review the Schwarzian derivative,
introduce the related stochastic process, which we refer to as the Schwarzian process,
and use it to express the diffusion laws.

\subsection{Schwarzian derivative}
We introduce the pre-Schwarzian and Schwarzian derivatives.
Let $I,I^{\prime}$ be open intervals in $\mathbb{R}$,
and
$\mathcal{D}(I,I^{\prime})$ be the set of all functions $f: I \longrightarrow I^{\prime}$
such that $f \in C^{3}(I,I^{\prime})$ and $f^{\prime}(x)\neq 0$ for all $x \in I$.
In the following, we omit the domain $\mathcal{D}(I,I^{\prime})$ if it is clear and we denote $\mathcal{D}(I)$ if $I = I^{\prime}$.
\begin{definition}
The pre-Schwarzian derivative $T_{f}$
and Schwarzian derivative $S_{f}$ of $f$
are respectively defined as follows:
\begin{eqnarray}
T_{f} &:=& \frac{f^{\prime\prime}}{f^{\prime}} = (\log f^{\prime})^{\prime}, \\
S_{f} &:=& T_{f}^{\prime} - \frac{1}{2} T_{f}^{2}
  = \left(\frac{f^{\prime\prime}}{f^{\prime}} \right)^{\prime}
  - \frac{1}{2}\left(\frac{f^{\prime\prime}}{f^{\prime}}\right)^{2}
  = \frac{f^{\prime\prime\prime}}{f^{\prime}}
  - \frac{3}{2}\left(\frac{f^{\prime\prime}}{f^{\prime}}\right)^{2}.
\end{eqnarray}
\end{definition}

We list several basic properties (see, for example, Chapter 6 of \citet{beals2020explorations}):
\begin{itemize}
\item[(T1)] $T_{f} = 0$ if and only if $f$ is an affine transform.
\item[(T2)] $T_{g\circ f} = (T_{g}\circ f)\cdot f^{\prime}+T_{f}$.
\item[(T3)] In particular, $T_{g\circ f} = T_{f}, T_{f\circ g} = (T_{f}\circ g)\cdot g^{\prime}$ if $g$ is an affine transform.
\item[(T4)] If $T_{f}=T_{g}$, $f$ is an affine transform of $g$.
\item[(T5)] $T_{1/f} = T_{f} - 2 f^{\prime}/f$.
\end{itemize}  
Furthermore:
\begin{itemize}
\item[(S1)] $S_{f} = 0$ if and only if $f$ is a M\"{o}bius transform.
\item[(S2)] $S_{g\circ f} = (S_{g}\circ f)\cdot (f^{\prime})^{2}+S_{f}$.
\item[(S3)] In particular, $S_{g\circ f} = S_{f}, S_{f\circ g} = (S_{f}\circ g)\cdot (g^{\prime})^{2}$ if $g$ is a M\"{o}bius transform.
\item[(S4)] If $S_{f}=S_{g}$, $f$ is a M\"{o}bius transform of $g$.
\item[(S5)] $h:=1/\sqrt{f^{\prime}}$ satisfies
$h^{\prime} = -\frac{1}{2} T_{f} h$
and $h^{\prime\prime} = -\frac{1}{2} S_{f} h$.
\end{itemize}

\subsection{Schwarzian process}
Let $X$ be a continuous semimartingale that has values in $I$.
For $f \in \mathcal{D}(I,I^{\prime})$, we define a stochastic process that we refer to as the Schwarzian process of $f$, as follows:
\begin{definition}
The Schwarzian process $\mathcal{S}^{f}(X)$ of $f \in \mathcal{D}(I,I^{\prime})$ 
is defined as
\begin{eqnarray}
  \mathcal{S}_{t}^{f}(X)
  = \sqrt{\frac{f^{\prime}(X_{0})}{f^{\prime}(X_{t})}}
  \exp \left(\frac{1}{4} \int_{0}^{t} S_{f}(X_{u}) d\left<X,X\right>_{u} \right). \label{eq:schwarizan_process}
\end{eqnarray}
\end{definition}

The following properties correspond to (S1) to (S5):
\begin{itemize}
\item[(1)] If $\varphi$ is a M\"{o}bius transform; 
more precisely, $\varphi(x) = \frac{ax+b}{cx+d}$ for $ad-bc \neq 0$,
we obtain
\begin{eqnarray}
\mathcal{S}^{\varphi}(X) = \frac{cX+d}{cX_{0}+d}.
\end{eqnarray}
\item[(2)] The composition of two functions corresponds to the multiplication of two Schwarzian processes:
\begin{eqnarray}
\mathcal{S}_{t}^{f\circ g}(X)
=
\mathcal{S}_{t}^{f}(g(X))\mathcal{S}_{t}^{g}(X). \label{eq:composition}
\end{eqnarray}
\item[(3)] In particular, if $f=\varphi$ or $g=\varphi$ for a M\"{o}bius transform $\varphi$, as above,
\begin{eqnarray}
\mathcal{S}_{t}^{\frac{ag+b}{cg+d}}(X)
=
\frac{cg(X_{t})+d}{cg(X_{0})+d} \mathcal{S}_{t}^{g}(X)
\end{eqnarray}
and
\begin{eqnarray}
\mathcal{S}_{t}^{f \circ \varphi}(X)
=
\mathcal{S}^{f}\left(\frac{aX+b}{cX+d}\right)\frac{cX+d}{cX_{0}+d}.
\end{eqnarray}
\item[(4)] If $\mathcal{S}^{f}(X)=\mathcal{S}^{g}(X)$,
$T_{f}(X_{t})=T_{g}(X_{t})$ for $d\left<X,X\right>_{t}$-a.e. almost surely,
which follows from property (5) below.
\item[(5)] The application of Ito's formula to $h=1/\sqrt{f^{\prime}}$ 
yields
\begin{eqnarray}
  h(X_{t}) = h(X_{0})
  -\frac{1}{2} \int_{0}^{t} h(X_{u}) \left(T_{f}(X_{u})dX_{u}+\frac{1}{2}S_{f}(X_{u})d\left<X,X\right>_{u} \right),
\end{eqnarray}
and
a Schwarzian process is an exponential semimartingale:
\begin{eqnarray}
  \mathcal{S}_{t}^{f}(X)
  &=& \exp \left(-\frac{1}{2}\int_{0}^{t} T_{f}(X_{u})dX_{u}-\frac{1}{8} \int_{0}^{t} T_{f}(X_{u})^{2} d\left<X,X\right>_{u} \right)  \nonumber\\
&=& 1 - \int_{0}^{t} \frac{1}{2}T_{f}(X_{u}) \mathcal{S}_{u}^{f}(X) dX_{u}.
\end{eqnarray}
In particular, if $X$ is a local martingale,
so is the Schwarzian process $\mathcal{S}^{f}(X)$.
\end{itemize}

\begin{example}
If $f(x)=(x-\xi)^{\alpha}$ for $\alpha \neq 0$ and $f(x)=\log (x-\xi)$ for $\alpha=0$,  
we obtain 
\begin{eqnarray}
T_{f}(x) &=& \frac{\alpha-1}{x-\xi},\\
S_{f}(x) &=& \frac{1-\alpha^{2}}{2(x-\xi)^{2}},\\
\mathcal{S}_{t}^{f}(X)
&=& \left(\frac{X_{0}-\xi}{X_{t}-\xi}\right)^{\alpha/2-1/2}
\exp \left(\frac{1}{8}\left(1-\alpha^{2}\right)
\int_{0}^{t} \frac{d \left<X,X\right>_{u}}{(X_{u}-\xi)^{2}}\right).
\end{eqnarray}
We denote this process as $\mathcal{S}^{(\alpha,\xi)}$
and $\mathcal{S}^{(\alpha)}=\mathcal{S}^{(\alpha,0)}$.
Note that, if $X$ is a Wiener process,
the process $\mathcal{S}^{(-2\nu)}(X)$ is a nonnegative local martingale, which is the Radon--Nikodym density with respect to the Wiener measure of the probability measure, where $X$ is a Bessel process with index $\nu$.
\end{example}

\begin{example}
Let $\xi_{1},\cdots,\xi_{n}$ and $\alpha_{1},\cdots,\alpha_{n}$ be real numbers
such that $0<\pi \alpha_{j}<2\pi$, and $\sum_{j=1}^{n}(\alpha_{j}-1)=-2$.
Then, using the Schwarz--Christoffel formula,
\begin{eqnarray}
f(x) = \int_{1}^{x} \Pi_{j=1}^{n}(\xi-\xi_{j})^{\alpha_{j}-1} d\xi
\end{eqnarray}
is the conformal mapping from the upper half-plane to a polygon,
and 
\begin{eqnarray}
  T_{f}(x) &=& \sum_{j=1}^{n} \frac{\alpha_{j}-1}{x-\xi_{j}}, \\
  S_{f}(x) &=& \sum_{j=1}^{n} \left(\frac{1-\alpha_{j}^{2}}{2(x-\xi_{j})^{2}} + \frac{\beta_{j}}{x-\xi_{j}}\right),\\
  \mathcal{S}_{t}^{f}(X) &=& \prod_{j=1}^{n} \mathcal{S}_{t}^{(\alpha_{j},\xi_{j})}(X)\exp \left(\frac{1}{4}\int_{0}^{t}\frac{\beta_{j}}{X_{u}-\xi_{j}} d \left<X,X\right>_{u} \right),
\end{eqnarray}
where $\beta_{j}$ are real numbers (see Theorem 6.4.2 of \citet{beals2020explorations}).
\end{example}

\subsection{Laws of diffusions}
\label{sec:space}
Throughout this study, we work on the sample space $\Omega=C^{\pm}([0,\infty)) \times \mathbb{R}$,
where $C^{\pm}([0,\infty))$ denotes the space of continuous paths $\omega:[0,\infty) \rightarrow \mathbb{R} \cup \{\pm \infty\}$
and $\pm \infty$ are assumed to be absorbing points that are
equipped with the coordinate process $(X^{*},J_{0}^{*})$; that is,
$(X_{t}^{*},J_{0}^{*})(\omega,j)=(\omega(t),j)$ for $(\omega,j)\in \Omega$.
We introduce the following further processes in this space:
$J_{t}^{*} = J_{0}^{*} \vee \sup_{u \le t} X_{u}^{*}$ for $t>0$,
$X_{t} = 2J_{t}^{*} - X_{t}^{*}$,
and $J_{t}=\inf_{u\ge t}X_{u}$;
a probability measure $W_{x}$, under which $X$ is a Wiener process starting from $x \in \mathbb{R}$;
sigma algebras
$\mathcal{F}_{t}^{0} = \bigcap_{\varepsilon > 0} \sigma(X_{u}; u \le t+\varepsilon)$
and $\tilde{\mathcal{F}}_{t}^{0} = \bigcap_{\varepsilon > 0} \sigma(X_{u}^{*},J_{0}^{*}; u \le t+\varepsilon)$;
and a random time $\Lambda_{t}^{*}$:
\begin{eqnarray}
\Lambda_{t}^{*} := \sup \{u \in (t,\infty] : X_{u}^{*} = J_{u}^{*} = J_{t}^{*} \},
\end{eqnarray}
where $\Lambda_{t}^{*}=\infty$ if and only if $\lim_{u\rightarrow \infty}X_{u}^{*} = J_{t}^{*}$, and $\Lambda_{t}^{*}=\mathfrak{T}$, which is a time beyond the horizon, if and only if $\{\}=\emptyset$.

Note that
$\{J_{t} = J_{t}^{*}\}$
does not belong to $\tilde{\mathcal{F}}_{t}^{0}$, but belongs to $\tilde{\mathcal{F}}_{\infty}^{0}$.
Pitman's theorem demonstrates that $J_{t} = J_{t}^{*}$ almost surely holds for any $t \ge 0$
if $X$ is a three-dimensional Bessel process,
and $J_{t}$ and $J_{t}^{*}$ need not be distinguished when working with such a probability measure.
However, not all probability measures have this property. 
The remark following Proposition \ref{prop:skorokhod_sde}
describes the existence of a probability measure under which $J_{\infty}^{*} < J_{\infty}=\infty$ almost surely holds.

We consider the probabilities in the sample space $\Omega$
such that $X$ is a diffusion that takes values in the interval $I$,
for which we use open bounded intervals $I_{n}$ that increase to $I$ with $\bar{I}_{n} \subseteq I_{n+1}$.
The stopping time $T_{n}$ with respect to $(\mathcal{F}_{t}^{0})$ is defined as
\begin{eqnarray}
T_{n} = \inf \{t \ge 0 : X_{t} \notin I_{n} \}
\end{eqnarray}
for $n=1,2,\cdots$,
and $T_{\infty} := \lim_{n \rightarrow \infty} T_{n}$.

Finally, as a general notation, for a stopping time $T$ and process $\gamma$, the stopped process is denoted by $\gamma^{T}$,
and for a probability measure $P$, the expectation operator is also denoted by $P$.
For $f \in \mathcal{D}$, the operator $\mathcal{G}^{f}$ denotes
the second-order differential operator:
\begin{eqnarray}
\mathcal{G}^{f} := \frac{1}{2} \frac{d^{2}}{dx^{2}}-\frac{1}{2}T_{f}(x) \frac{d}{dx}.
\end{eqnarray}

\subsubsection{Laws of general diffusions}
For $x \in I, s \in \mathcal{D}(I,I^{\prime})$, as the Schwarzian process $\mathcal{S}^{s}(X)$ is a nonnegative local martingale on $(\Omega,(\mathcal{F}_{t}^{0}),W_{x})$,
a probability measure $P_{x}^{s}$
on $\bigvee_{n} \mathcal{F}_{T_{n}}^{0}$ can be introduced such that
\begin{eqnarray}
  P_{x}^{s}|_{\mathcal{F}_{T_{n}}^{0}} = \mathcal{S}_{T_{n}}^{s}(X) W_{x}|_{\mathcal{F}_{T_{n}}^{0}}
\end{eqnarray}
for each $n$.
Subsequently,
\begin{eqnarray}
B_{t} = X_{t} + \int_{0}^{t} \frac{1}{2}T_{s}(X_{u}) du,\; t < T_{\infty} \label{eq:s_sde}
\end{eqnarray}
is a $P_{x}^{s}$-Brownian motion according to the Girsanov theorem,
$X$ is a diffusion with the generator $\mathcal{G}^{s}$,
and $s(X)$ is on its natural scale under $P_{x}^{s}$.
If the probability measure $P_{x}^{f}$ is similarly introduced, 
for each $n$, we obtain
\begin{eqnarray}
  P_{x}^{f}|_{\mathcal{F}_{T_{n}}^{0}} = \frac{\mathcal{S}_{T_{n}}^{f}(X)}{\mathcal{S}_{T_{n}}^{s}(X)} P_{x}^{s}|_{\mathcal{F}_{T_{n}}^{0}}
  = \mathcal{S}_{T_{n}}^{f\circ s^{-1}}(s(X)) P_{x}^{s}|_{\mathcal{F}_{T_{n}}^{0}},\label{eq:density}
\end{eqnarray}
where (\ref{eq:composition}) is used.

\subsubsection{Laws of transient diffusions and their future infimum}
Hereafter, we suppose $x > 0$ and the following assumption on $s$,
which ensures that
$\inf_{t \ge 0} X_{t} > 0$ and $\sup_{t \ge 0} X_{t} = \infty$, $P_{x}^{s}$-a.s.
according to Feller's test,
and $X$ does not explode in finite time $P_{x}^{s}$-a.s.
(see, for example, Theorem 5.29 in Chapter 5 of \citet{karatzas}),
thereby resulting in $P_{x}^{s}[T_{\infty} = \infty]=1$:
\begin{assumption}
\label{ass:s}
The function $s : (0,\infty) \longrightarrow (0,\infty)$ satisfies
\begin{eqnarray}
s(0) = +\infty,\; s(+\infty) = 0,\; s^{\prime} < 0 \label{ass:s1}
\end{eqnarray}
and
\begin{eqnarray}
  \lim_{x \rightarrow \infty}\int_{1}^{x} \frac{s(x)-s(\xi)}{s^{\prime}(\xi)} d\xi = \infty.\label{ass:s2}
\end{eqnarray}
\end{assumption}

We denote $(\mathcal{F}_{t}^{s})$ as the $P_{x}^{s}$-augmented natural filtration of $B$.
Thus, it is also the $P_{x}^{s}$-augmented natural filtration of $X$
because $B_{t}$ is a functional of $X_{u}, u \le t$,
and the pathwise uniqueness holds for the stochastic differential equation in the form of (\ref{eq:s_sde}).
Under $P_{x}^{s}$, the process $X$ is a transient diffusion with $P_{x}^{s}[\lim_{t \rightarrow \infty} X_{t} = \infty]=1$,
and $J$ is an $\mathbb{R}_{+}$-valued process.
We define $(\mathcal{F}_{t}^{s,J})$ as the $P_{x}^{s}$-augmentation of $(\mathcal{F}_{t}^{0} \vee \sigma(J_{t}))$
and extend $P_{x}^{s}$ such that $P_{x}^{s}[J_{0}^{*}=J_{0}]=1$.
Then, 
\begin{eqnarray}
\tilde{B}_{t} = B_{t}-2J_{t} - \int_{0}^{t} (s^{\prime}/s)(X_{u}) du
\end{eqnarray}
is a $(\mathcal{F}_{t}^{s,J})$-Brownian motion (see Section 5.7 of \citet{aksamit2017enlargement}),
and the semimartingale decompositions of $X$ for each filtration are expressed as
\begin{eqnarray}
  X_{t}
  &=& B_{t} - \int_{0}^{t} \frac{1}{2}T_{s}(X_{u}) du \\
  &=& \tilde{B}_{t} - \int_{0}^{t} \frac{1}{2}T_{1/s}(X_{u}) du + 2J_{t},\label{eq:sde_xj_s}
\end{eqnarray}
where property (T5) is used.

The case of $s(x)=1/x$, which satisfies (\ref{ass:s1}) and (\ref{ass:s2}),
corresponds to the law of the three-dimensional Bessel process.
If $P_{x}^{(1/2)}$ is the law that is defined by
\begin{eqnarray}
P_{x}^{(1/2)}|_{\mathcal{F}_{T_{n}}^{0}} = \frac{X_{T_{n}}}{x} W_{x}|_{\mathcal{F}_{T_{n}}^{0}},
\end{eqnarray}
$(X_{t})_{t \in [0,\infty)}$ is a three-dimensional Bessel process under $P_{x}^{(1/2)}$.
According to Pitman's theorem, as presented in Theorem 3.5 in Chapter VI of \citet{revuzyor},
the dual process $X_{t}^{*}$ is a Brownian motion.

We introduce the counterpart of $\mathcal{S}$ with respect to the enlarged filtration $(\tilde{\mathcal{F}}_{t}^{0})$:
\begin{eqnarray}
\tilde{\mathcal{S}}_{t}^{f}(X)
:= \frac{f^{\prime}(J_{t}^{*})}{f^{\prime}(J_{0}^{*})}  \mathcal{S}_{t}^{f}(2J^{*}-X^{*}),
\end{eqnarray}
which is a $P_{x}^{(1/2)}$-a.s. real-valued process that is adapted to $(\tilde{\mathcal{F}}_{t}^{0})$.
Subsequently, $\tilde{\mathcal{S}}^{f}(X)$ is a local martingale on $(\Omega,(\tilde{\mathcal{F}}_{t}^{0}),P_{x}^{(1/2)})$:
\begin{eqnarray}
\tilde{\mathcal{S}}_{t}^{f}(X)
= 1 + \int_{0}^{t} \frac{1}{2} T_{f}(2J_{u}^{*}-X_{u}^{*}) \tilde{\mathcal{S}}_{u}^{f}(X) dX^{*}_{u}.
\end{eqnarray}
We introduce a probability measure $\tilde{P}_{x}^{f}$ on $\bigvee_{n} \tilde{\mathcal{F}}_{T_{n}}^{0}$ such that
\begin{eqnarray}
\tilde{P}_{x}^{f}|_{\tilde{\mathcal{F}}_{T_{n}}^{0}} = \tilde{\mathcal{S}}_{T_{n}}^{f}(X) P_{x}^{(1/2)}|_{\tilde{\mathcal{F}}_{T_{n}}^{0}}.
\end{eqnarray}
According to Girsanov's theorem,
\begin{eqnarray}
\tilde{\beta}_{t} := -\left(X^{*}_{t} - \int_{0}^{t}\frac{1}{2}T_{f}(2J_{u}^{*}-X^{*}_{u}) du \right) \label{eq:bm_fxj}
\end{eqnarray}
is a $((\tilde{\mathcal{F}}_{t}^{0}),\tilde{P}_{x}^{f})$-Brownian motion up to $T_{\infty}$,
and hence, $X$ satisfies
\begin{eqnarray}
X_{t} = \tilde{\beta}_{t} - \int_{0}^{t} \frac{1}{2}T_{f}(X_{u}) du + 2J_{t}^{*}\label{eq:sde_xj_f}
\end{eqnarray}
up to $T_{\infty}$.

Conversely, for a given law of $J_{0}^{*}$ on $(0,x)$
and Brownian motion $\tilde{\beta}$ with $\tilde{\beta}_{0}=2J_{0}^{*}-x$,
$(X^{*},J^{*})$ can be constructed through a stochastic differential equation of the Skorokhod type using the following proposition with $(\chi,l)=(J^{*}-X^{*},J^{*})$:
\begin{proposition}
\label{prop:skorokhod_sde}
For $\chi_{0} \ge 0,l_{0} > 0$ and a Brownian motion $\gamma$ with $\gamma_{0} = \chi_{0}-l_{0}$,
there exists a unique pair $(\chi,l)$ of continuous functions that are adapted to $\gamma$
such that
$\chi$ is nonnegative,
$l$ is nondecreasing, 
\begin{eqnarray}
l_{t} = l_{0}+\int_{0}^{t} 1_{\{\chi_{u}=0\}} dl_{u},
\end{eqnarray}
and $(\chi,l)$ satisfies the stochastic differential equation of the Skorokhod type:
\begin{eqnarray}
\chi_{t} = \gamma_{t} - \int_{0}^{t} \frac{1}{2}T_{f}(\chi_{u}+l_{u}) du + l_{t} \label{eq:skorokhod}
\end{eqnarray}
for $t < \zeta := \inf \{t : \chi_{t}+l_{t} = \infty\}$.
Furthermore, suppose that there exists $C>0$ such that 
\begin{eqnarray}
|T_{f}(x)| < C x 
\end{eqnarray}
for $x>l_{0}$.
Subsequently, $\zeta=\infty$ almost surely.
\end{proposition}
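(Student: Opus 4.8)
The plan is to read equation~(\ref{eq:skorokhod}) as a reflected stochastic differential equation at the fixed boundary $0$, with a locally Lipschitz but state-dependent drift, to solve it pathwise by a Picard iteration built on the Skorokhod reflection map, and then to deduce non-explosion from a Gronwall estimate that exploits the linear growth bound on $T_{f}$. To set up the fixed point, put $m_{t}:=l_{t}-l_{0}\ge 0$; the requirements that $l$ be nondecreasing and increase only on $\{\chi=0\}$ say precisely that $(\chi,m)$ is the one-sided Skorokhod reflection at $0$ of the path $\eta_{t}:=\gamma_{t}+l_{0}-\int_{0}^{t}\tfrac12 T_{f}(\chi_{u}+l_{u})\,du$, i.e.\ $\chi=\Gamma(\eta)$ and $m=\Gamma(\eta)-\eta$ with $\Gamma(\eta)_{t}=\eta_{t}+\bigl(-\inf_{s\le t}\eta_{s}\bigr)^{+}$; since $\eta_{0}=\gamma_{0}+l_{0}=\chi_{0}\ge 0$ the initial data are respected and $m_{0}=0$. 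Thus a solution of~(\ref{eq:skorokhod}) is exactly a fixed point of the map $\Phi$ sending $v=(\chi,l)$ to $\bigl(\Gamma(\eta[v]),\,l_{0}+\Gamma(\eta[v])-\eta[v]\bigr)$, where $\eta[v]$ is the path above. Because $\chi\ge 0$ and $l\ge l_{0}>0$, the argument $\chi_{u}+l_{u}$ of $T_{f}$ stays in $[l_{0},\infty)$, on which $T_{f}=f''/f'$ is $C^{1}$ (as $f\in\mathcal{D}$), hence bounded and Lipschitz on each compact subinterval.

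First I would establish local existence and uniqueness by a contraction. The Skorokhod map is $2$-Lipschitz for the supremum norm, so on the set where $\|\chi+l\|_{\infty,[0,T]}\le R$ --- on which the drift is bounded and Lipschitz with constants depending only on $R$ --- one gets $\|\Phi(v^{1})-\Phi(v^{2})\|_{\infty,[0,T]}\le c_{R}T\,\|v^{1}-v^{2}\|_{\infty,[0,T]}$; moreover $\Phi$ leaves invariant the closed set $\{v=(\chi,l):\|\chi+l\|_{\infty,[0,T]}\le R,\ \chi\ge 0,\ l\ge l_{0}\text{ nondecreasing and increasing only on }\{\chi=0\},\ (\chi_{0},l_{0})\text{ as prescribed}\}$ once $R$ is chosen large (pathwise, in terms of $\sup_{s\le 1}|\gamma_{s}|$) and $T=T(R)$ is small. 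Banach's fixed point theorem then yields a unique local solution; since the Picard iterates started from the constant path $(\chi_{0},l_{0})$ are adapted to $\gamma$ and $\Gamma$ is a causal pathwise operation, the limit is adapted to $\gamma$. Patching these pieces and letting $R\uparrow\infty$ gives a unique maximal adapted solution on $[0,\zeta)$; because $\chi_{t}+l_{t}\ge l_{0}>0$, the process $\chi+l$ can leave $[l_{0},R)$ only through $R$, so $\zeta$ is the explosion time $\inf\{t:\chi_{t}+l_{t}=\infty\}$ and $\sup_{s<\zeta}(\chi_{s}+l_{s})=\infty$ on $\{\zeta<\infty\}$.

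For the non-explosion claim I would run a Gronwall argument on $u_{t}:=\chi_{t}+l_{t}$ and $\phi(t):=\sup_{s\le t}u_{s}$ (finite for $t<\zeta$). From the fixed-point identity, $u_{t}=\gamma_{t}-D_{t}+2l_{0}+2m_{t}$ with $D_{t}=\int_{0}^{t}\tfrac12 T_{f}(u_{s})\,ds$ and $m_{t}\le\bigl(-\inf_{s\le t}\eta_{s}\bigr)^{+}\le\sup_{s\le t}|\gamma_{s}|+\sup_{s\le t}|D_{s}|+l_{0}$, so $u_{t}\le 3\sup_{s\le t}|\gamma_{s}|+3\sup_{s\le t}|D_{s}|+4l_{0}$. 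As $u_{s}\ge l_{0}$, the hypothesis $|T_{f}(x)|<Cx$ for $x>l_{0}$ (together with continuity of $T_{f}$ at $l_{0}$) yields $|D_{t}|\le\tfrac C2\int_{0}^{t}u_{s}\,ds\le\tfrac C2\int_{0}^{t}\phi(s)\,ds$, hence for all $t<\zeta$
\[
  \phi(t) \le 3\sup_{s\le t}|\gamma_{s}|+4l_{0}+\frac{3C}{2}\int_{0}^{t}\phi(s)\,ds .
\]
Gronwall's inequality then gives $\phi(t)\le\bigl(3\sup_{s\le t}|\gamma_{s}|+4l_{0}\bigr)e^{3Ct/2}$ for $t<\zeta$; on $\{\zeta<\infty\}$ the right-hand side remains bounded as $t\uparrow\zeta$, contradicting $\sup_{s<\zeta}u_{s}=\infty$, so $\zeta=\infty$ almost surely.

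The step I expect to be the main obstacle is controlling the coupling of the reflection term $l$ back into the drift of $\chi$: one must show that $m$ (equivalently $l$) is itself dominated by $\gamma$ and by the drift integral through the Skorokhod map, so that the Gronwall estimate closes on $u=\chi+l$ rather than on $\chi$ alone. Verifying that $T_{f}$ is well defined and locally Lipschitz on the half-line $[l_{0},\infty)$ that is actually visited is needed but immediate from $f\in\mathcal{D}$, while the contraction estimate, the patching/continuation argument, and the adaptedness claim are otherwise routine.
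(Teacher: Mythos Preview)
Your proposal is correct and follows essentially the same route the paper points to: the paper does not spell out a proof but simply refers to \citet{SaishoTanemura1990}, whose construction is precisely the Skorokhod-reflection/Picard-iteration argument you carry out, there under a global Lipschitz hypothesis with $\chi_{0}=l_{0}=0$. Your localization (truncating at level $R$, getting a contraction on a short interval, then patching) and the Gronwall bound on $\phi(t)=\sup_{s\le t}(\chi_{s}+l_{s})$ are exactly the standard modifications needed to pass from global to local Lipschitz plus linear growth, so you have made explicit what the paper leaves to the reference.
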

Proposition \ref{prop:skorokhod_sde} can be proven in the same manner 
as in \citet{SaishoTanemura1990}, who considered a more general case with $\chi_{0} = l_{0} = 0$ under the assumption that the coefficients satisfy global Lipschitz conditions.
We also refer to an equation such as (\ref{eq:sde_xj_f})
as a stochastic differential equation of the Skorokhod type.
Whereas $P_{x}^{(1/2)}[J_{t}=J_{t}^{*}]=P_{x}^{(1/2)}[\Lambda_{t}^{*}<\infty]=1$ for all $t \ge0$,
we may obtain $\tilde{P}_{x}^{f}[\Lambda_{t}^{*}=\mathfrak{T}] > 0$,
which is the case, for example, if $T_{f}$ is a negative constant; that is, $\chi-l$ is a Brownian motion with positive drift.

Next, we discuss the equivalence between $P_{x}^{(1/2)}$ and $\tilde{P}_{x}^{f}$. The equivalence between two probability measures was investigated extensively
by \cite{MijatovicUrusov2012_FS}, \citet{MijatovicUrusov2012_PR}, 
and especially \citet{CarrFisherRuf2014}, with great generality.
As opposed to applying these results to our case,
we provide sufficient conditions for the equivalence for completeness.
\begin{assumption}
\label{ass:equivalence}
There exists $C > 0$ such that $|T_{f}(x)| < C x$ for any $x>1$.
\end{assumption}
\begin{lemma}
\label{lem:equivalence}
Suppose that $f$ satisfies Assumption \ref{ass:equivalence}.
Thus, $\tilde{P}_{x}^{f}[T_{\infty}=\infty]=1$ holds,
in which case the two probability measures $P_{x}^{(1/2)}$ and $\tilde{P}_{x}^{f}$ are equivalent on $\tilde{\mathcal{F}}_{T}^{0}$ for any $T>0$.
\end{lemma}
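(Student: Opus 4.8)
\noindent The plan is to establish the two assertions in turn. First I would prove the non-explosion claim $\tilde{P}_x^f[T_\infty=\infty]=1$, which is where Assumption \ref{ass:equivalence} enters and is the substantive part; then I would deduce the equivalence on $\tilde{\mathcal F}_T^0$ from it by a short set-theoretic argument, using the consistency relations $\tilde{P}_x^f|_{\tilde{\mathcal F}_{T_n}^0}=\tilde{\mathcal S}_{T_n}^f(X)\,P_x^{(1/2)}|_{\tilde{\mathcal F}_{T_n}^0}$ together with the fact that the reference measure $P_x^{(1/2)}$, being the law of a three-dimensional Bessel process, is itself non-explosive.

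\textbf{Step 1 (non-explosion under $\tilde{P}_x^f$).} Under $\tilde{P}_x^f$ the process $\tilde{\beta}$ of (\ref{eq:bm_fxj}) is a Brownian motion and (\ref{eq:sde_xj_f}) holds up to $T_\infty$. Setting $\chi:=J^{*}-X^{*}$ and $l:=J^{*}$, so that $\chi+l=2J^{*}-X^{*}=X$, I would observe that (\ref{eq:sde_xj_f}) is precisely the Skorokhod-type equation (\ref{eq:skorokhod}) driven by $\gamma=\tilde{\beta}$, with $\chi_0=x-J_0^{*}\ge0$ and $l_0=J_0^{*}>0$; the reflection conditions hold because $\chi\ge0$ (as $J^{*}$ is a running supremum of $X^{*}$), $l$ is nondecreasing, and $dl=dJ^{*}$ is carried by $\{X^{*}=J^{*}\}=\{\chi=0\}$. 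By the pathwise uniqueness asserted in Proposition \ref{prop:skorokhod_sde}, $(\chi,l)=(J^{*}-X^{*},J^{*})$ is then the solution of (\ref{eq:skorokhod}), and its explosion time $\zeta=\inf\{t:\chi_t+l_t=\infty\}=\inf\{t:X_t=\infty\}$ equals $T_\infty$, the last identity because $X=\chi+l\ge l_0>0$ precludes absorption at $0$. It remains only to verify the growth hypothesis of Proposition \ref{prop:skorokhod_sde}, namely $|T_f(x)|<Cx$ for $x>l_0=J_0^{*}$: Assumption \ref{ass:equivalence} gives this for $x>1$, while on the compact interval $[J_0^{*},1]$ (when $J_0^{*}<1$) continuity of $T_f=f''/f'$ yields $|T_f(x)|\le M\le (M/J_0^{*})\,x$ since $x\ge J_0^{*}>0$, so the bound holds for all $x>J_0^{*}$ with a constant depending on the (possibly random) level $J_0^{*}$; conditioning on $J_0^{*}$, or exhausting $\{J_0^{*}>\varepsilon\}$ and letting $\varepsilon\downarrow0$, Proposition \ref{prop:skorokhod_sde} then gives $\zeta=\infty$, i.e.\ $\tilde{P}_x^f[T_\infty=\infty]=1$.

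\textbf{Step 2 (equivalence).} Write $P:=P_x^{(1/2)}$ and $Z_t:=\tilde{\mathcal S}_t^f(X)$, so that $\tilde{P}_x^f|_{\tilde{\mathcal F}_{T_n}^0}=Z_{T_n}\,P|_{\tilde{\mathcal F}_{T_n}^0}$. I would use two facts about $P$: that $T_\infty=\infty$ $P$-a.s.\ (the three-dimensional Bessel process is transient to $+\infty$, never hits $0$, and does not explode), and that $Z_{T_n}=\tilde{\mathcal S}_{T_n}^f(X)>0$ $P$-a.s.\ for every $n$ (since $\tilde{\mathcal S}^f(X)$ is the stochastic exponential of a continuous $P$-local martingale whose bracket is a.s.\ finite on $[0,T_n]$, because $T_f$ is bounded on the compact $\bar{I}_n$ in which $X$ evolves up to $T_n$). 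Fix $T>0$ and $A\in\tilde{\mathcal F}_T^0$. Since $\{T_n>T\}\uparrow\{T_\infty>T\}$, $A\cap\{T_n>T\}\in\tilde{\mathcal F}_{T_n}^0$, and $T_\infty=\infty$ almost surely under both $P$ and (by Step 1) $\tilde{P}_x^f$, continuity from below gives $\tilde{P}_x^f[A]=\lim_n P[Z_{T_n}\,1_{A\cap\{T_n>T\}}]$ and $P[A]=\lim_n P[A\cap\{T_n>T\}]$, both limits being nondecreasing. If $P[A]=0$, then $1_{A\cap\{T_n>T\}}=0$ $P$-a.s.\ for every $n$, so $\tilde{P}_x^f[A]=0$; conversely, if $\tilde{P}_x^f[A]=0$, then $P[Z_{T_n}\,1_{A\cap\{T_n>T\}}]=0$ for every $n$, and $Z_{T_n}>0$ $P$-a.s.\ forces $P[A\cap\{T_n>T\}]=0$ for every $n$, hence $P[A]=0$. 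Thus $P_x^{(1/2)}$ and $\tilde{P}_x^f$ are equivalent on $\tilde{\mathcal F}_T^0$ for every $T>0$.

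I expect the main obstacle to be Step 1 — identifying the enlarged-filtration dynamics (\ref{eq:sde_xj_f}) of $X$ under $\tilde{P}_x^f$ with the reflected (Skorokhod) equation for $(J^{*}-X^{*},J^{*})$, and matching the linear-growth bound of Assumption \ref{ass:equivalence} to the non-explosion criterion of Proposition \ref{prop:skorokhod_sde}, the one genuinely delicate point being to absorb the random reflection level $J_0^{*}$ into the growth constant. Once non-explosion under $\tilde{P}_x^f$ is established, the passage to equivalence is routine bookkeeping with the consistent family defining $\tilde{P}_x^f$.
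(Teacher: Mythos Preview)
Your proof is correct. Step~1 coincides with the paper's argument (note $J^{*}-X^{*}=X-J^{*}$, so your pair $(\chi,l)$ is literally the paper's $(X-J^{*},J^{*})$); you are in fact more careful than the paper in extending the linear-growth bound from $x>1$ to $x>J_0^{*}$ when $J_0^{*}<1$.

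Step~2 differs from the paper. The paper proves equivalence by showing that the reciprocal density $\tilde M:=1/\tilde{\mathcal S}^{f}(X)$ is a true $((\tilde{\mathcal F}_t^{0}),\tilde P_x^{f})$-martingale on $[0,T]$: it is a nonnegative local martingale (hence supermartingale) and the bound
\[
\tilde P_x^{f}[\tilde M_T]\ \ge\ \lim_{n\to\infty}\tilde P_x^{f}[\tilde M_{T\wedge T_n}:T<T_n]\ =\ \lim_{n\to\infty}P_x^{(1/2)}[T<T_n]\ =\ 1
\]
forces $\tilde P_x^{f}[\tilde M_T]=1$. You instead bypass the martingale property and argue directly at the level of null sets, exhausting $\tilde{\mathcal F}_T^{0}$ through the sets $\{T_n>T\}$ and using that $T_\infty=\infty$ under both measures. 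Your route is slightly more elementary and requires no analysis of $\tilde M$; the paper's route yields, as a by-product, that $\tilde M$ is an honest martingale (equivalently, that $P_x^{(1/2)}=\tilde M_T\,\tilde P_x^{f}$ on $\tilde{\mathcal F}_T^{0}$), which is a somewhat stronger statement and is occasionally useful downstream.
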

\begin{proof}
By applying Proposition \ref{prop:skorokhod_sde} with $(\chi,l)=(X-J^{*},J^{*})$ and $\gamma = \tilde{\beta}$,
we determine $0 < J_{0}^{*} \le J_{t}^{*} \le X_{t}$,
$\tilde{P}_{x}^{f}$-a.s.
We obtain $\tilde{P}_{x}^{f}[T_{\infty}=\infty]=1$.
Let $\tilde{M}$ be the reciprocal of $\tilde{S}^{f}(X)$.
Then, $\tilde{M}$ is a local martingale with respect to $((\tilde{\mathcal{F}}_{t}^{0}),\tilde{P}_{x}^{f})$
with a localizing sequence $T_{n}$.
Furthermore, $\tilde{M}$ is a martingale because
\begin{eqnarray}
\tilde{P}_{x}^{f}[\tilde{M}_{T}]
\ge
\lim_{n\rightarrow \infty} \tilde{P}_{x}^{f}[\tilde{M}_{T \wedge T_{n}} : T < T_{n}]
=
\lim_{n\rightarrow \infty} P_{x}^{(1/2)}[T < T_{n}]
=
1.
\end{eqnarray}
\end{proof}

\section{Financial market}
\label{sec:market}
We consider a financial market that comprises two primary security accounts, namely a savings account and a stock, in the probability space $(\Omega,(\mathcal{F}_{t}^{s}),P_{x}^{s})$, where $s$ satisfies Assumption \ref{ass:s}.
For simplicity, the interest rate is assumed to be zero,
and the stock price is modeled as $Y=f(X)$.
We provide the following assumption regarding $f$
to ensure that $X$ does not explode to infinity in finite time $P_{x}^{f}$-a.s.:
\begin{assumption}
\label{ass:f}
The function $f \in \mathcal{D}((0,\infty))$ satisfies the following: 
\begin{itemize}
\item[(1)] If $f^{\prime} < 0$, then $f(+\infty) = 0$ and
\begin{eqnarray}
\lim_{x \rightarrow \infty}\int_{1}^{x} \frac{f(x)-f(\xi)}{f^{\prime}(\xi)} d\xi = \infty.
\end{eqnarray}
\item[(2)] If $f^{\prime} > 0$, $f(0+) = 0$ and $f(+\infty) = +\infty$.
\end{itemize}
\end{assumption}

We assume that two types of agents exist: an investor and a fundraiser.
The investor is an ordinary trader, whereas
the fundraiser is supposed to issue new stocks or buy back stocks.
As a result, the information available to the fundraiser is richer than that available to the investor and he or she has additional cash flows when these operations are performed.
Section \ref{sec:investor_market} describes the market for the investor, with a particular focus on the GOP,
whereas Section \ref{sec:fundraiser_market} explains the market for the fundraiser.

\subsection{Financial market for investor}
\label{sec:investor_market}
In the filtered probability space $(\Omega,(\mathcal{F}_{t}^{s}),P_{x}^{s})$,
the dynamics of $Y$ is expressed as
\begin{eqnarray}
Y_{t} = Y_{0} + \int_{0}^{t} Y_{u} \frac{f^{\prime}}{f}(X_{u}) \left( dB_{u} + \theta(X_{u}) du \right),
\end{eqnarray}
where $\theta$ is the market price of risk: 
\begin{eqnarray}
\theta(x) = \frac{1}{2}(T_{f}-T_{s})(x).
\end{eqnarray}
In this market, 
the investor makes decisions to buy or sell stocks based on the natural information of $Y$; that is, $(\mathcal{F}_{t}^{s})$.
A pair of $(\mathcal{F}_{t}^{s})$-predictable processes $(\delta^{0},\delta^{1})$
expresses the trading strategy:
$\delta^{0}$ is the number of units of the savings account and $\delta^{1}$ is that of the stock.
We assume that all trading strategies are self-financing;
that is, the value $V$ of the corresponding portfolio is determined by
\begin{eqnarray}
  V_{t} = \delta_{t}^{0} + \delta_{t}^{1} Y_{t}
  = V_{0} + \int_{0}^{t} \delta_{u}^{1} dY_{u},
\end{eqnarray}
where the stochastic integral is assumed to be well defined.
We further assume that $V_{0} > 0$ for all trading strategies 
and consider the portfolio up to when $V$ reaches $0$.
Subsequently, the trading strategy is identified by the fraction $\delta^{1}S / V$ of the stock.
However, it is convenient to identify a portfolio using $\xi$,
which is a predictable process with $\int_{0}^{t} \xi_{u}^{2} du < \infty$,
and is related to $(\delta^{0},\delta^{1})$, using the equation
\begin{eqnarray}
\xi_{t} = \frac{\delta_{t}^{1}Y_{t}}{\delta_{t}^{0} + \delta_{t}^{1} Y_{t}}\frac{f^{\prime}(X_{t})}{f(X_{t})}.
\end{eqnarray}
This is the fraction of the stock multiplied by its volatility.
We denote the class of this type of process $\xi$ as $\mathcal{L}^{2}$
and the corresponding portfolio value as $V^{\xi}$.
The log price of $V^{\xi}$ is
\begin{eqnarray}
\log V_{t}^{\xi}
= \log V_{0}^{\xi}
+ \int_{0}^{t} \xi_{u} dB_{u} + \int_{0}^{t} \frac{1}{2}(-\xi_{u}^{2}+2\theta(X_{u}) \xi_{u}) du.
\end{eqnarray}

Next, we introduce the GOP,
which is characterized as the portfolio that maximizes the drift term of the log price.
\begin{definition}
The GOP is a portfolio with a value $G=V^{\eta}, \eta \in \mathcal{L}^{2}$ that satisfies any $\xi \in \mathcal{L}^{2}$
\begin{eqnarray}
-\eta_{t}^{2}+2\theta(X_{t}) \eta_{t}
\ge
-\xi_{t}^{2}+2\theta(X_{t}) \xi_{t}
\end{eqnarray}
for all $t$ almost surely.
\label{def:gop}
\end{definition}
We always consider a GOP with $G_{0}=1$.
The following proposition can be demonstrated using straightforward computations:
\begin{proposition}
The GOP for the investor is
\begin{eqnarray}
G_{t} = V_{t}^{\theta} = \frac{\mathcal{S}_{t}^{s}(X)}{\mathcal{S}_{t}^{f}(X)}
= \sqrt{\frac{s^{\prime}(X_{0})}{s^{\prime}(X_{t})}}
\sqrt{\frac{f^{\prime}(X_{t})}{f^{\prime}(X_{0})}}
\exp\left(\frac{1}{4} \int_{0}^{t} (S_{s}-S_{f})(X_{u}) du \right), \label{eq:gop}
\end{eqnarray}
and we obtain $P_{x}^{s}$-a.s. that $0 < G_{t} < \infty$ for all $t \in [0,\infty)$.
\end{proposition}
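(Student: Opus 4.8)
The plan is to first pin down the maximizer in Definition \ref{def:gop} and then evaluate its portfolio value in closed form. For each fixed $(t,\omega)$ the map $\xi \mapsto -\xi^{2}+2\theta(X_{t})\xi$ is a downward parabola whose unique maximum is attained at $\xi=\theta(X_{t})$, with value $\theta(X_{t})^{2}$; hence any GOP must have $\eta_{t}=\theta(X_{t})$, and conversely $\eta=\theta(X)$ is a GOP as soon as it belongs to $\mathcal{L}^{2}$. To verify admissibility, recall that under $P_{x}^{s}$ Assumption \ref{ass:s} and Feller's test give $\inf_{u\ge 0}X_{u}>0$ with no explosion, so for each $t$ the path $u\mapsto X_{u}$ on $[0,t]$ stays in a compact subinterval of $(0,\infty)$; since $\theta=\tfrac12(T_{f}-T_{s})$ is continuous there (as $f,s\in C^{3}$ with non-vanishing first derivatives), $\int_{0}^{t}\theta(X_{u})^{2}\,du<\infty$ $P_{x}^{s}$-a.s. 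Thus $G=V^{\theta}$, and the log-price formula with $\xi=\theta$ and $G_{0}=1$ yields
\[
\log G_{t}=\int_{0}^{t}\theta(X_{u})\,dB_{u}+\frac12\int_{0}^{t}\theta(X_{u})^{2}\,du .
\]

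Next I would match this with $\mathcal{S}_{t}^{s}(X)/\mathcal{S}_{t}^{f}(X)$. By property (5) of the Schwarzian process, for $g\in\{s,f\}$ one has $\log\mathcal{S}_{t}^{g}(X)=-\tfrac12\int_{0}^{t}T_{g}(X_{u})\,dX_{u}-\tfrac18\int_{0}^{t}T_{g}(X_{u})^{2}\,d\langle X,X\rangle_{u}$. Inserting $d\langle X,X\rangle_{u}=du$ and the $P_{x}^{s}$-semimartingale decomposition $dX_{u}=dB_{u}-\tfrac12T_{s}(X_{u})\,du$ from (\ref{eq:s_sde}) and subtracting the two expressions,
\begin{align*}
\log\frac{\mathcal{S}_{t}^{s}(X)}{\mathcal{S}_{t}^{f}(X)}
&=-\frac12\int_{0}^{t}(T_{s}-T_{f})(X_{u})\,dB_{u}\\
&\quad +\int_{0}^{t}\Big(\tfrac14(T_{s}-T_{f})T_{s}-\tfrac18(T_{s}^{2}-T_{f}^{2})\Big)(X_{u})\,du .
\end{align*}
The first integrand is $\tfrac12(T_{f}-T_{s})=\theta$, and the identity $\tfrac14(T_{s}-T_{f})T_{s}-\tfrac18(T_{s}^{2}-T_{f}^{2})=\tfrac18(T_{s}-T_{f})^{2}=\tfrac12\theta^{2}$ makes the drift term equal to $\tfrac12\int_{0}^{t}\theta(X_{u})^{2}\,du$. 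Hence $G_{t}=\mathcal{S}_{t}^{s}(X)/\mathcal{S}_{t}^{f}(X)$, and substituting the explicit form (\ref{eq:schwarizan_process}) of each Schwarzian process --- with $s'(X_{0})/s'(X_{t})>0$ and $f'(X_{t})/f'(X_{0})>0$ because $s',f'$ are continuous and never vanish on $(0,\infty)$ --- produces the displayed closed form.

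Finally, $0<G_{t}<\infty$ for all $t$, $P_{x}^{s}$-a.s., follows from the representation $G_{t}=\exp(N_{t})$ with $N_{t}=\int_{0}^{t}\theta(X_{u})\,dB_{u}+\tfrac12\int_{0}^{t}\theta(X_{u})^{2}\,du$: on the full-measure event on which $X$ remains in $(0,\infty)$ without exploding, $\theta(X)$ is locally bounded, so $N$ is a well-defined continuous process on all of $[0,\infty)$, whence $G_{t}\in(0,\infty)$ simultaneously for every $t$. I do not expect a genuine obstacle; the only points needing care are the admissibility check $\theta(X)\in\mathcal{L}^{2}$ and the bookkeeping in the quadratic-variation identity, both of which rest on the $C^{3}$ and non-vanishing hypotheses on $f,s$ together with Assumption \ref{ass:s}.
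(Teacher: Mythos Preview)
Your proof is correct and amounts to exactly the ``straightforward computations'' the paper alludes to in lieu of a detailed argument: identifying the pointwise maximizer $\eta=\theta(X)$, verifying admissibility via the non-explosion of $X$, and checking the algebraic identity $\tfrac14(T_{s}-T_{f})T_{s}-\tfrac18(T_{s}^{2}-T_{f}^{2})=\tfrac18(T_{s}-T_{f})^{2}$ that matches $\log G$ with $\log(\mathcal{S}^{s}/\mathcal{S}^{f})$. There is nothing to add or correct.
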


Note that a GOP is also characterized by the following equivalent conditions:
\begin{itemize}
\item[(i)] $V^{\xi}/G$ is a local martingale for any $\xi \in \mathcal{L}^{2}$ until $V^{\xi}$ reaches $0$.
\item[(ii)] $Z:=1/G$ and $Y/G$ are local martingales.
\end{itemize}

\subsection{Financial market for fundraiser}
\label{sec:fundraiser_market}
We consider the stock price $Y$ with the enlarged filtration $(\mathcal{F}_{t}^{s,J})$ in the probability space $(\Omega,\mathcal{F}_{\infty}^{s,J},P_{x}^{s})$.
The semimartingale decomposition of $Y$ with respect to $(\mathcal{F}_{t}^{s,J})$ is
\begin{eqnarray}
Y_{t}
= Y_{0} + \int_{0}^{t} Y_{u} \frac{f^{\prime}}{f}(X_{u}) \left( d\tilde{B}_{u} + \tilde{\theta}(X_{u}) du \right)+2(f(J_{t})-f(J_{0})),\label{eq:sde_yj}
\end{eqnarray}
where 
\begin{eqnarray}
\tilde{\theta}(x) = \frac{1}{2}(T_{f}-T_{1/s})(x).
\end{eqnarray}

Let $L_{t}=2(f(J_{0})-f(J_{t}))$ and $n$ be a positive adapted process, which we interpret as the number of stocks.
We assume that
\begin{eqnarray}
\int_{0}^{t} Y_{u} dn_{u} = \int_{0}^{t} n_{u} dL_{u}.\label{eq:ass_L}
\end{eqnarray}
The amount of money that is raised by the fundraiser for each stock when funding occurs is $dL_{t}=Y_{t} dn_{t}/n_{t}$.
More precisely, if $f^{\prime} < 0$, he or she issues $dn_{t}$ units of new stocks and raises $Y_{t}dn_{t}=n_{t}dL_{t}$ money.
If $f^{\prime} > 0$, he or she buys back $-dn_{t}$ units of stocks
and spends $-Y_{t}dn_{t}=-n_{t}dL_{t}$ money.
This can be interpreted as causing a rise or fall of $|dL_{t}|$ in the stock price, respectively.
Thus, the singular term in (\ref{eq:sde_yj}), namely $L$,
is considered as the result of artificial actions.
In contrast, the aggregate market price 
\begin{eqnarray}
n_{t}Y_{t} &=& n_{0}Y_{0} + \int_{0}^{t} n_{u} dY_{u} + \int_{0}^{t} Y_{u} dn_{u} \nonumber\\
&=& n_{0}Y_{0} + \int_{0}^{t} n_{u} d(Y_{u}+L_{u})
\end{eqnarray}
has no singular term if and only if Assumption (\ref{eq:ass_L}) holds,
which means that the aggregate market price is not manipulated.

The fundraiser can use the raised money $dL_{t}$ to construct a portfolio.
We consider the strategy that is characterized by $(\delta^{0},\delta^{1})$:
\begin{eqnarray}
  \delta_{t}^{0} &=& \delta_{0}^{0} + \int_{0}^{t} (1-\kappa_{u}) \delta_{u}^{1} dL_{u}, \\
  \delta_{t}^{1} &=& \delta_{1}^{1} + \int_{0}^{t} \kappa_{u} \delta_{u}^{1} \frac{dL_{u}}{Y_{u}},
\end{eqnarray}
where $\kappa$ is a $[0,1]$-valued predictable process,
which is the ratio for allocating the money $dL_{t}$ to its own stock at time $t$.
The value $W$ of the corresponding portfolio is 
\begin{eqnarray}
  W_{t} = W_{0} \exp \left(\int_{0}^{t} \frac{\pi_{u}}{Y_{u}} (dY_{u}+dL_{u}) - \frac{1}{2} \int_{0}^{t} \left(\frac{\pi_{u}}{Y_{u}}\right)^{2} d\left<Y,Y\right>_{u} \right),  \label{eq:reinvestment}
\end{eqnarray}
where $\pi_{t} := \delta_{t}^{1} Y_{t}/W_{t}$.
In particular, the substitution of $\kappa=1$, $\delta_{0}^{0}=0$ and $W_{0}=Y_{0}$ leads to
\begin{eqnarray}
W_{t} = Y_{t} \left(\frac{f(J_{0})}{f(J_{t})} \right)^{2}.\label{eq:fullreinvestment}
\end{eqnarray}
We assume that the fundraiser uses this strategy.
Then, the market for the fundraiser, which we refer to as the hypothetical market, comprises 
a savings account and risky security account $\tilde{Y}$:
\begin{eqnarray}
\tilde{Y}_{t}
:= Y_{t} \left(\frac{f(J_{0})}{f(J_{t})} \right)^{2}
= Y_{0} + \int_{0}^{t} \tilde{Y}_{u} \frac{f^{\prime}}{f}(X_{u}) \left( d\tilde{B}_{u} + \tilde{\theta}(X_{u}) du \right).
\end{eqnarray}
The market price of risk is $\tilde{\theta}(X_{t})$.

We consider the GOP for the fundraiser.
The argument is almost the same as that in Section \ref{sec:investor_market},
and we use the notation with the accent $\tilde{} \;$, such as $\tilde{V}^{\xi}$.
We obtain the following proposition:
\begin{proposition}
The GOP for the fundraiser is
\begin{eqnarray}
\tilde{G}_{t}
= \tilde{V}_{t}^{\tilde{\theta}}
= \sqrt{\frac{(1/s)^{\prime}(X_{0})}{(1/s)^{\prime}(X_{t})}\frac{f^{\prime}(X_{t})}{f^{\prime}(X_{0})}}
\frac{(1/s)^{\prime}(J_{t})}{(1/s)^{\prime}(J_{0})}
\frac{f^{\prime}(J_{0})}{f^{\prime}(J_{t})}
\mathrm{e}^{\frac{1}{4} \int_{0}^{t} (S_{s}-S_{f})(X_{u}) du}.
\end{eqnarray}
\end{proposition}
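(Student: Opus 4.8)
The plan is to mimic exactly the derivation of the GOP for the investor, but now working in the hypothetical market $(\Omega,(\tilde{\mathcal F}_t^{s,J}),P_x^s)$ with price process $\tilde Y$ and market price of risk $\tilde\theta(X_t)=\tfrac12(T_f-T_{1/s})(X_t)$. By the same straightforward computation that gave $G=V^\theta$, the optimiser of the drift $-\xi^2+2\tilde\theta\xi$ over $\xi\in\mathcal L^2$ is $\xi=\tilde\theta(X)$, so $\tilde G=\tilde V^{\tilde\theta}$ and
\begin{eqnarray}
\log\tilde G_t=\int_0^t\tilde\theta(X_u)\,d\tilde B_u+\frac12\int_0^t\tilde\theta(X_u)^2\,du.
\end{eqnarray}
First I would identify $\tilde G$ as a product of two Schwarzian-type factors by writing $T_f-T_{1/s}=(T_f-T_s)+(T_s-T_{1/s})$ and recalling (T5), which gives $T_s-T_{1/s}=2s'/s=-2(\log s)'$ and hence, via (T5) applied to $1/s$, also $T_{1/s}=T_s-2s'/s$; these are exactly the relations already used to pass between (\ref{eq:s_sde}), (\ref{eq:sde_xj_s}) and (\ref{eq:sde_yj}). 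The factor coming from $T_f-T_s=2\theta$ is, by the investor computation, precisely $G_t=\mathcal S_t^s(X)/\mathcal S_t^f(X)=\sqrt{s'(X_0)/s'(X_t)}\sqrt{f'(X_t)/f'(X_0)}\,\mathrm e^{\frac14\int_0^t(S_s-S_f)(X_u)\,du}$.

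Next I would handle the remaining factor, the stochastic exponential of $\int(s'/s)(X_u)\,d\tilde B_u$ type terms, using the enlarged-filtration decomposition. The point is that under $(\tilde{\mathcal F}_t^{s,J})$ we have the representation (\ref{eq:sde_xj_s}), $X_t=\tilde B_t-\int_0^t\frac12 T_{1/s}(X_u)\,du+2J_t$, so that $d\tilde B_u=dX_u+\frac12 T_{1/s}(X_u)\,du-2\,dJ_u$. Plugging this into $\log\tilde G$ and collecting the $dX_u$, $du$ and $dJ_u$ pieces, the $dX_u$ and $du$ parts reassemble — by Itô's formula applied to $\log\sqrt{(1/s)'}$ and to $\log\sqrt{f'}$, exactly as in property (5) of the Schwarzian process — into the state-dependent factor $\sqrt{(1/s)'(X_0)/(1/s)'(X_t)}\sqrt{f'(X_t)/f'(X_0)}\,\mathrm e^{\frac14\int_0^t(S_{1/s}-S_f)(X_u)\,du}$, while the $dJ_u$ part, since $J$ is continuous and increases only when $X_u=J_u$, integrates against the local-time-type measure $dJ$ and produces precisely the boundary factors $\frac{(1/s)'(J_t)}{(1/s)'(J_0)}\frac{f'(J_0)}{f'(J_t)}$. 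Here I would use the M\"obius invariance of the Schwarzian (S3), namely $S_{1/s}=S_s$ since $x\mapsto 1/x$ is a M\"obius transform, to rewrite $S_{1/s}-S_f=S_s-S_f$ and thereby match the exponential in the claimed formula. Assembling the two factors gives exactly the stated expression for $\tilde G_t$.

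The main obstacle I anticipate is the careful treatment of the singular term $dJ$ in $\log\tilde G$: one must verify that the contribution of $\int_0^t (\text{something})(X_u)\,dJ_u$ collapses to a function of $J_t$ alone, which uses that $dJ_u$ is carried on $\{X_u=J_u\}$ together with the relation between $T_{1/s}$, $T_f$ and the logarithmic derivatives of $(1/s)'$ and $f'$; equivalently, one can bypass this by noting that $\tilde G_t=\tilde{\mathcal S}_t^s(X)/\tilde{\mathcal S}_t^f(X)$ in the notation of the enlarged-filtration Schwarzian process $\tilde{\mathcal S}$, and invoking its semimartingale representation already recorded in Section \ref{sec:space}. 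A minor secondary point is the bookkeeping that $(1/s)'=-s'/s^2<0$ under Assumption \ref{ass:s} and $f'$ has constant sign under Assumption \ref{ass:f}, so all square roots and ratios are well defined and strictly positive; together with $0<J_0\le J_t\le X_t<\infty$ this yields $0<\tilde G_t<\infty$, $P_x^s$-a.s., for all $t$.
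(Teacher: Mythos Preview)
Your proposal is correct and is precisely the ``straightforward computation'' the paper gestures at (no proof is given there): optimising $-\xi^2+2\tilde\theta\xi$ gives $\xi=\tilde\theta$, substituting $d\tilde B_u=dX_u+\tfrac12 T_{1/s}(X_u)\,du-2\,dJ_u$ into $\log\tilde G_t$, using that $dJ_u$ is carried on $\{X_u=J_u\}$ so $\int_0^t(T_f-T_{1/s})(X_u)\,dJ_u=\log\frac{f'(J_t)}{f'(J_0)}-\log\frac{(1/s)'(J_t)}{(1/s)'(J_0)}$, and invoking $S_{1/s}=S_s$ yields exactly the stated formula. One small slip: in your alternative shortcut the correct identity is $\tilde G_t=\tilde{\mathcal S}_t^{1/s}(X)/\tilde{\mathcal S}_t^f(X)$, not $\tilde{\mathcal S}_t^{s}(X)/\tilde{\mathcal S}_t^{f}(X)$, since the fundraiser's market price of risk involves $T_{1/s}$ rather than $T_s$.
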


\section{Derivative pricing}
\label{sec:derivative}
In this section, we discuss pricing derivatives in the market model that was considered in Section \ref{sec:market},
where $P_{x}^{s}$ is regarded as a physical measure.
As typical examples, we consider two cases: $s(x)=f(x)=1/x$ and $s(x)=1/x,f(x)=x$,
where $X$ is a three-dimensional Bessel process.
The first case is a standard example of a bubble:
the stock price $Y$ is the reciprocal of a three-dimensional Bessel process,
which is a strict local martingale.
An equivalent local martingale measure exists, namely $P_{x}^{s}$.
In the second case, $Y$ is the three-dimensional Bessel process $X$.
This is one of the best-known examples of markets without an equivalent local martingale measure; therefore, arbitrage opportunities exist.
These examples are discussed in Section \ref{sec:example}.

\citet{ruf2012} demonstrated that delta hedging provides the optimal trading strategy in terms of the minimal required initial capital to replicate a given terminal payoff in market models that include no equivalent local martingale measure and only a square-integrable market price of risk.
This method generalizes some results of the benchmark approach (\citet{platen2006benchmark}).
Furthermore, \citet{ruf2012} proposed a change of measure such that the dynamics of $Y$ becomes simpler.
This new measure coincides with the equivalent local martingale measure if it exists uniquely.

We adopt the method of \citet{ruf2012} to price a European option that pays $h(Y_{T})$ at time $T$,
where $h$ is a nonnegative Borel function.
In this case, the derivative price is defined as the minimal required initial capital to replicate a payoff.
Although arbitrage opportunities may exist for the investor,
an equivalent martingale measure is shown to exist in the hypothetical market for the fundraiser.
Hence, derivative pricing for the fundraiser falls into the standard theory because the stock price process (\ref{eq:sde_yj}) is regarded as the price of a stock that pays dividends $-2df(J_{t}^{*})$.

\subsection{Derivative pricing for investor}
According to Girsanov's theorem, a $P_{x}^{f}$-Brownian motion $\beta$
exists such that
the dynamics of $X,Y$, and $G$ are expressed as
\begin{eqnarray}
X_{t} &=& \beta_{t} - \int_{0}^{t} \frac{1}{2}T_{f}(X_{u}) du,\\
Y_{t} &=& Y_{0} + \int_{0}^{t} \sigma(Y_{u}) d\beta_{u}, \label{eq:stock} \\
G_{t} &=& 1 + \int_{0}^{t} G_{u} \theta(X_{u}) d\beta_{u}, \\
\beta_{t} &=& B_{t} + \int_{0}^{t} \theta(X_{u}) du \label{eq:bm_f}
\end{eqnarray}
for $t < T_{\infty}$, where $\sigma = f^{\prime} \circ f^{-1}$.
We extend $G$ such that $G_{t}=0$ for $t > T_{\infty}$,
because $X_{T_{\infty}}=0$ on $\{G_{T_{\infty}}=0\}$,
and define the function $H : [0,T] \times I^{\prime} \longrightarrow [0,\infty)$ as
\begin{eqnarray}
H(t,y) := P_{f^{-1}(y)}^{s}[h(f(X_{T-t}))Z_{T-t}]
= P_{f^{-1}(y)}^{f}[h(f(X_{T-t})):G_{T-t}>0]. \label{eq:stoch_solution}
\end{eqnarray}
If the payoff function $h$ is selected so that Assumption (A3) of \citet{ruf2012} is satisfied, which is the case when $h$ is of at most linear growth,
Theorems 4.1, 4.2, and 5.1 of \citet{ruf2012} can be applied to our market model.
\begin{proposition}
\label{prop:ruf}
For each $t,y\in [0,T) \times (0,\infty)$,
there exist some $C>0$ and some neighborhood $U$ of $(t,y)$ such that $H(\tau,\eta) < C$ for all $(\tau,\eta) \in U$.
Then, the time $t$ price $V_{t}$ for the investor of the European option that pays $h(Y_{T})$ at time $T$ is 
\begin{eqnarray}
V_{t} = H(t,f(X_{t})) = P_{X_{t}}^{s}\left[h(f(X_{T-t})) Z_{T-t}\right], \label{eq:investorprice}
\end{eqnarray}
$P_{x}^{s}$-a.s.,
and
\begin{eqnarray}
V_{t}  = P_{X_{t}}^{f}\left[h(f(X_{T-t})) :G_{T-t}>0\right], \label{eq:derivative}
\end{eqnarray}
$P_{x}^{f}$-a.s. (and thus, $P_{x}^{s}$-a.s.) on $\{G_{t} > 0\}$.
Furthermore, $H$ is a solution to 
\begin{eqnarray}
\left\{ \begin{array}{l}
v_{t}(t,y) = -\frac{1}{2} \sigma(y)^{2} v_{yy}(t,y) \label{eq:pde}, \\
v(T,y) = h(y).
\end{array} \right.
\end{eqnarray}
\end{proposition}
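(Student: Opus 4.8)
The plan is to recognize the claim as a direct application of the general theory of \citet{ruf2012}, so that essentially all of the work reduces to (i) checking that our parametrization $Y=f(X)$ meets Ruf's standing hypotheses and (ii) bookkeeping the mass that the strict-local-martingale deflator $Z$ loses under the change of measure from $P_{x}^{s}$ to $P_{x}^{f}$.

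First I would collect the structural facts that place us inside Ruf's framework. Under $P_{x}^{f}$ the stock price obeys the driftless equation (\ref{eq:stock}), $dY_{t}=\sigma(Y_{t})\,d\beta_{t}$ with $\sigma=f^{\prime}\circ f^{-1}$, so $Y$ is a nonnegative local martingale on $[0,T_{\infty})$; nonnegativity is automatic because Assumption \ref{ass:f} forces $I^{\prime}\subseteq(0,\infty)$, both when $f^{\prime}<0$ (where $f(+\infty)=0$) and when $f^{\prime}>0$ (where $f(0+)=0$, $f(+\infty)=+\infty$). Since $f\in C^{3}$ with $f^{\prime}\neq 0$, the coefficient $\sigma$ is continuous and strictly positive on $I^{\prime}$, and Assumption \ref{ass:f} keeps $X$ from exploding to $+\infty$ under $P_{x}^{f}$; these are exactly what is needed to verify, term by term, Assumptions (A1)--(A3) of \citet{ruf2012}, the regularity hypotheses there under which $H$ is a $C^{1,2}$ function. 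The deflator is $Z=1/G$ for the investor GOP $G$ of (\ref{eq:gop}), and $Z^{T_{n}}$ is a genuine $P_{x}^{s}$-martingale realizing $dP_{x}^{f}/dP_{x}^{s}$ on each $\mathcal{F}_{T_{n}}^{0}$, since $Z=\mathcal{S}^{f}(X)/\mathcal{S}^{s}(X)$ by construction.

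Second I would show $H$ is finite and establish its two probabilistic representations. Since $f(X)$ is a nonnegative $P^{f}$-local martingale it is a $P^{f}$-supermartingale, hence $P_{f^{-1}(y)}^{f}[f(X_{T-t})]\le y$; combined with the at-most-linear bound $h(y)\le C(1+y)$ this gives $H(t,y)<\infty$. The identity
\begin{eqnarray*}
P_{f^{-1}(y)}^{s}\!\left[h(f(X_{T-t}))\,Z_{T-t}\right]
&=& P_{f^{-1}(y)}^{f}\!\left[h(f(X_{T-t})):G_{T-t}>0\right]
\end{eqnarray*}
follows by localizing along $(T_{n})$: for $\mathcal{F}_{T-t}^{0}$-measurable $\Phi\ge 0$ one has $P_{f^{-1}(y)}^{s}[\Phi\,Z_{T-t}\,1_{\{T_{n}>T-t\}}]=P_{f^{-1}(y)}^{f}[\Phi\,1_{\{T_{n}>T-t\}}]$ since $Z_{(T-t)\wedge T_{n}}=Z_{T-t}$ on $\{T_{n}>T-t\}$, and letting $n\to\infty$ by monotone convergence gives the claim because $\bigcup_{n}\{T_{n}>T-t\}=\{T_{\infty}>T-t\}=\{G_{T-t}>0\}$ up to a $P^{f}$-null set ($G$ is continuous and strictly positive on $[0,T_{\infty})$ with $G_{T_{\infty}}=0$, and it is extended by $0$ afterwards). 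This is the familiar fact that $Z$ loses mass precisely on $\{G_{\cdot}=0\}$.

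Third, with the hypotheses in force, I would invoke Theorems 4.1, 4.2 and 5.1 of \citet{ruf2012}: Theorems 4.1--4.2 give that $H\in C^{1,2}([0,T)\times I^{\prime})$, that $H$ solves the terminal-value problem (\ref{eq:pde}), and that delta hedging with value process $H(t,Y_{t})$ replicates $h(Y_{T})$; Theorem 5.1 identifies $H(t,f(X_{t}))$ as the minimal initial capital needed to superreplicate $h(Y_{T})$, which is by definition the investor price $V_{t}$. The representation $V_{t}=H(t,f(X_{t}))$ then holds $P_{x}^{s}$-a.s. by the Markov property of $X$ under $P_{x}^{s}$ (a non-exploding diffusion with generator $\mathcal{G}^{s}$ by Assumption \ref{ass:s}, so $P_{x}^{s}[T_{\infty}=\infty]=1$), and combining this with the displayed identity and the Markov property of $X$ under $P_{x}^{f}$ yields the second representation (\ref{eq:derivative}) on $\{G_{t}>0\}$. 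The step I expect to be most delicate is the faithful matching of Ruf's regularity conditions (A1)--(A3) to Assumptions \ref{ass:f}--\ref{ass:s} in our coordinates, together with the careful handling of the event $\{G_{T-t}=0\}$ (equivalently $T_{\infty}\le T-t$), on which $P_{x}^{s}$ and $P_{x}^{f}$ are mutually singular and where all the subtlety of pricing under a strict local martingale is concentrated.
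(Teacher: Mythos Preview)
Your proposal is correct and follows essentially the same approach as the paper: check that the model satisfies Assumptions (A1)--(A3) of \citet{ruf2012}, note that $H$ is finite because $f(X)$ is a nonnegative $P^{f}$-local martingale and $h$ is of at most linear growth, and then invoke Theorems 4.1, 4.2, and 5.1 of \citet{ruf2012}. The paper treats the proposition as a direct application of those theorems, so your only addition is spelling out the localization argument for the identity between the $P^{s}$ and $P^{f}$ representations, which is consistent with the paper's framework.
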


\subsection{Derivative pricing for fundraiser}
\label{sec:derivative_fundraiser}
Similarly, we introduce a probability measure under which $\tilde{G}$ is a local martingale.
We assume that $P_{x}^{s}[J_{t}=J_{t}^{*}]=1$ for all $t$, which is the case for $s(x)=1/x$.
We consider the filtered probability space $(\Omega,(\tilde{\mathcal{F}}_{t}^{0}),\tilde{P}_{x}^{1/s})$,
because $(X,J)$ in this probability space follows the same law 
as $(X,J)$ under $P_{x}^{s}$.
Suppose that $f$ and $1/s$ satisfy Assumption \ref{ass:equivalence}, which ensures that $\tilde{P}_{x}^{f}[T_{\infty}=\infty]=\tilde{P}_{x}^{1/s}[T_{\infty}=\infty]=1$.
Then, $\tilde{P}_{x}^{1/s}$ and $\tilde{P}_{x}^{f}$ are equivalent on $\tilde{\mathcal{F}}_{T}^{0}$
with $\tilde{P}_{x}^{f} = \tilde{Z}_{T}\tilde{P}_{x}^{1/s}$,
where $\tilde{Z}_{T}=1/\tilde{G}_{T}$.
The dynamics of $X,\tilde{Y}$, and $\tilde{G}$ in the filtered probability space $(\Omega,(\tilde{\mathcal{F}}_{t}^{f}),\tilde{P}_{x}^{f})$,
where $(\tilde{\mathcal{F}}_{t}^{f})$ is the $\tilde{P}_{x}^{f}$-augmentation of $(\tilde{\mathcal{F}}_{t}^{0})$,
are expressed as
\begin{eqnarray}
X_{t} &=& \tilde{\beta}_{t} - \int_{0}^{t} \frac{1}{2}T_{f}(X_{u}) du + 2J_{t}^{*}, \label{eq:x_pf_xj}\\
\tilde{Y}_{t} &=& Y_{0} + \int_{0}^{t} \tilde{Y}_{u} \frac{f^{\prime}}{f}(X_{u}) d\tilde{\beta}_{u}, \\
\tilde{G}_{t} &=& 1 + \int_{0}^{t} \tilde{G}_{u} \tilde{\theta}(X_{u}) d\tilde{\beta}_{u},
\end{eqnarray}
where $\tilde{\beta}$ is a $\tilde{P}_{x}^{f}$-Brownian motion with $\tilde{\beta}_{0} = x-2J_{0}^{*}$.
Let $\tilde{P}_{x,j}^{\cdot} := \tilde{P}_{x}^{\cdot}[\cdot | J_{0}^{*}=j]$
for $x,j$ with $x \ge j > 0$,
and
\begin{eqnarray}
\tilde{H}(t,y,k) &:=& \tilde{P}_{f^{-1}(y),f^{-1}(k)}^{1/s}[h(f(X_{T-t}))\tilde{Z}_{T-t}] \nonumber\\
&=& \tilde{P}_{f^{-1}(y),f^{-1}(k)}^{f}[h(f(X_{T-t}))]
\end{eqnarray}
for $(y,k) \in D := \{(y,k) \in I^{\prime} \times I^{\prime} \; : f^{-1}(y) \ge f^{-1}(k) > 0\}$.
The following assumption, for which a sufficient condition is provided later, is required:
\begin{assumption}
\label{ass:smooth}
For each $k>0$, $\tilde{H}(\cdot,\cdot,k): [0,T) \times I_{k}^{\prime} \longrightarrow \mathbb{R}$
is in $C^{1,2}$,
where $I_{k}^{\prime}=\{y \in I^{\prime} : f^{-1}(y) > f^{-1}(k) > 0\}$.
\end{assumption}

The following theorem is the main theorem of this study.
The derivative price for the fundraiser is derived
and it converges to the price for the investor as $j$ tends to $0$ under a certain condition, which can be used to compute derivative prices in markets with bubbles, as demonstrated in Section \ref{sec:bse}.
\begin{theorem}
\label{them:main}
Suppose
that $s$ satisfies Assumption \ref{ass:s},
$1/s$ satisfies Assumption \ref{ass:equivalence},
and $P_{x}^{s}[J_{t}=J_{t}^{*}]=1$ for all $t$;
that $f$ satisfies Assumptions \ref{ass:equivalence} and \ref{ass:f};
and that $\tilde{H}$ satisfies Assumption \ref{ass:smooth}.

Then, the time $t$ price $\tilde{V}_{t}$ for the fundraiser of the European option that pays $h(Y_{T})$ at time $T$ is 
\begin{eqnarray}
  \tilde{V}_{t} = \tilde{H}(t,f(X_{t}),f(J_{t}^{*}))
= \tilde{P}_{X_{t},J_{t}^{*}}^{f}\left[h(f(X_{T-t})) \right],
\end{eqnarray}
$\tilde{P}_{x,j}^{1/s}$-a.s.,
and $\tilde{H}$ is a solution to 
\begin{eqnarray}
\left\{ \begin{array}{l}
  v_{t}(t,y,k) = -\frac{1}{2} \sigma(y)^{2} v_{yy}(t,y,k), \\
  v(T,y,k) = h(y).
\end{array} \right.
\label{eq:pde_f}
\end{eqnarray}
Furthermore, suppose that $h$ is nondecreasing,
$f^{\prime}<0$, and $P_{x}^{f}[T_{\infty} = \infty]=1$.
Then, we obtain
\begin{eqnarray}
H(\tau,f(x))
= \lim_{j \downarrow 0} \tilde{H}(\tau,f(x),f(j)).
\end{eqnarray}
\end{theorem}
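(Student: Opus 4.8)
I would prove the two assertions separately: first the pricing formula for the fundraiser together with the Black--Scholes equation (\ref{eq:pde_f}), then the limit $j\downarrow0$. For the first assertion I would work in the hypothetical market on $(\Omega,(\tilde{\mathcal{F}}_t^0),\tilde P_x^{1/s})$, where $(X,J^*)$ carries the physical law of $(X,J)$. Since $f$ and $1/s$ both satisfy Assumption \ref{ass:equivalence}, Lemma \ref{lem:equivalence} gives $\tilde P_x^{1/s}\sim\tilde P_x^f$ on $\tilde{\mathcal{F}}_T^0$ with density $\tilde Z_T=1/\tilde G_T$ and $\tilde P_x^f[T_\infty=\infty]=1$, while under $\tilde P_x^f$ the traded asset $\tilde Y$ is a local martingale. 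Applying the minimal-replication argument of \citet{ruf2012} exactly as in Proposition \ref{prop:ruf}, the time-$t$ replication cost of $h(Y_T)=h(f(X_T))$ equals $\tilde P_x^{1/s}[h(f(X_T))\tilde Z_T\mid\tilde{\mathcal{F}}_t^0]=\tilde P_x^f[h(f(X_T)):\tilde G_T>0\mid\tilde{\mathcal{F}}_t^0]$, and since $\tilde G_t>0$ for all $t$ on $\{T_\infty=\infty\}$ this is just $\tilde P_x^f[h(f(X_T))\mid\tilde{\mathcal{F}}_t^0]$; this is finite for the same reason as in the investor case (linear growth of $h$ together with a supermartingale bound --- via $f(X)\le\tilde Y$ when $f'<0$, and a moment estimate for (\ref{eq:x_pf_xj}) when $f'>0$). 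Because $(X,J^*)$ is a time-homogeneous Markov process under $\tilde P_x^f$ --- pathwise uniqueness for the Skorokhod-type equation (\ref{eq:sde_xj_f}), Proposition \ref{prop:skorokhod_sde} --- this conditional expectation equals $\tilde P_{X_t,J^*_t}^f[h(f(X_{T-t}))]=\tilde H(t,f(X_t),f(J^*_t))$. For the PDE, fix $(t_0,y_0,k_0)$ with $f^{-1}(y_0)>f^{-1}(k_0)>0$ and $t_0<T$, start $(X,J^*)$ at $(f^{-1}(y_0),f^{-1}(k_0))$; then $J^*$ remains equal to $f^{-1}(k_0)$ until $X$ first descends to that level, so on a short stopped interval $X$ is an ordinary $\mathcal{G}^f$-diffusion and $M_s:=\tilde H(t_0+s,f(X_s),k_0)$ is a $\tilde P_x^f$-martingale. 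Applying It\^o's formula with Assumption \ref{ass:smooth} and the identity $f''=T_{f}f'$, the drift of $f(X_s)$ cancels, $d\langle f(X)\rangle_s=\sigma(f(X_s))^2\,ds$, and the finite-variation part of $M$ reduces to $\int_0^s(\tilde H_t+\tfrac12\sigma^2\tilde H_{yy})\,du$, which must vanish; evaluating at $s=0$ gives $\tilde H_t(t_0,y_0,k_0)=-\tfrac12\sigma(y_0)^2\tilde H_{yy}(t_0,y_0,k_0)$, so $\tilde H$ solves (\ref{eq:pde_f}) on $\{f^{-1}(y)>f^{-1}(k)>0\}$. (Retaining the reflection term also yields $2\tilde H_y+\tilde H_k=0$ on the diagonal $y=k$, needed for the scheme of Section \ref{sec:bse} but not for the present statement.) The terminal condition $v(T,y,k)=h(y)$ follows by letting $T-t\downarrow0$ and using path-continuity and the integrability just noted.

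For the limit, observe first that under the extra hypothesis $P_x^f[T_\infty=\infty]=1$ the $\mathcal{G}^f$-diffusion $X$ never leaves $(0,\infty)$, so $G_t=\mathcal{S}_t^s(X)/\mathcal{S}_t^f(X)\in(0,\infty)$ for every $t$ and hence $H(\tau,f(x))=P_x^f[h(f(X_{T-\tau})):G_{T-\tau}>0]=P_x^f[h(f(X_{T-\tau}))]$, with no indicator. I would then set up a pathwise coupling on one probability space carrying a Brownian motion $W$: let $X^{(0)}$ solve $dX^{(0)}_t=dW_t-\tfrac12T_{f}(X^{(0)}_t)\,dt$, $X^{(0)}_0=x$, which has the law $P_x^f$ by pathwise uniqueness ($T_{f}$ is locally Lipschitz on $(0,\infty)$ since $f\in C^3$) and non-explosion; and for each $j\in(0,x)$ let $(X^{(j)},J^{(j),*})$ solve the Skorokhod-type equation (\ref{eq:sde_xj_f}) from $(x,j)$ driven by the same $W$, so that $X^{(j)}$ has the law of $X$ under $\tilde P_{x,j}^f$ (existence and non-explosion from Proposition \ref{prop:skorokhod_sde} and Assumption \ref{ass:equivalence}). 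Localizing with the exit times $T_n$ so that $T_{f}$ is Lipschitz, a comparison argument --- when $X^{(j)}$ meets $X^{(0)}$ the drift difference vanishes and the only extra term $2\,dJ^{(j),*}$ is nonnegative --- gives $X^{(j)}_t\ge X^{(0)}_t$ for all $t$.

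The decisive remark is that, $\tilde P_x^f$-a.s., $m_T:=\min_{u\le T}X^{(0)}_u>0$ because $X^{(0)}$ is continuous and $(0,\infty)$-valued; hence for every $j<m_T$ one has $X^{(j)}\ge X^{(0)}\ge m_T>j\ge J^{(j),*}$ on $[0,T]$, so $X^{(j)}$ never touches its reflecting floor there, $J^{(j),*}\equiv j$, and by pathwise uniqueness $X^{(j)}\equiv X^{(0)}$ on $[0,T]$. Therefore $h(f(X^{(j)}_{T-\tau}))=h(f(X^{(0)}_{T-\tau}))$ for all sufficiently small $j$, a.s.; moreover $0\le h(f(X^{(j)}_{T-\tau}))\le h(f(X^{(0)}_{T-\tau}))$ since $f'<0$, $h$ is nondecreasing, and $X^{(j)}\ge X^{(0)}$, and this dominating variable is integrable because $h$ is of linear growth and $f(X^{(0)})$ is a nonnegative $P_x^f$-local martingale, hence a supermartingale, so $P_x^f[h(f(X_{T-\tau}))]\le a+b\,f(x)<\infty$. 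Dominated convergence then yields $\tilde H(\tau,f(x),f(j))=E[h(f(X^{(j)}_{T-\tau}))]\longrightarrow E[h(f(X^{(0)}_{T-\tau}))]=H(\tau,f(x))$ as $j\downarrow0$.

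The part I expect to be genuinely delicate is not the limit --- the coupling makes it short once the comparison and the ``floor is never reached'' observation are in place --- but the first assertion: checking that the minimal replication cost is legitimately represented by the plain conditional expectation $\tilde P_x^f[h(f(X_T))\mid\cdot]$, i.e.\ verifying the hypotheses of \citet{ruf2012} in the hypothetical market and the finiteness and $C^{1,2}$-regularity of $\tilde H$ (where Assumptions \ref{ass:equivalence} and \ref{ass:smooth} do the work), and making the Markov property of the pair $(X,J^*)$ on the path space $\Omega$ with the enlarged filtration precise. The comparison theorem for the Skorokhod-type SDE, whose drift is only locally Lipschitz on $(0,\infty)$, also needs the localization by $T_n$ to be spelled out, but this is routine.
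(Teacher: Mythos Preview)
Your argument is correct. For the first assertion (price formula and PDE), your route is essentially the paper's: identify $\tilde P_x^f$ as an equivalent martingale measure via Lemma~\ref{lem:equivalence}, use the Markov property of $(X,J^{*})$ under $\tilde P_x^f$, and obtain \eqref{eq:pde_f} by applying It\^o's formula on an interval where $J^{*}$ is constant. The paper packages this last step slightly differently---it works on all rational intervals $[q,\theta_q)$ simultaneously and shows that the residual finite-variation process $\Lambda$ must vanish because it is constant off the zero-Lebesgue set $\{dJ^{*}>0\}$---but the substance is the same.

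For the convergence $\tilde H(\tau,f(x),f(j))\to H(\tau,f(x))$ your route is genuinely different and more elementary. The paper sandwiches $\tilde H$ between $H$ (upper bound, Lemma~\ref{lem:comp}) and the quantity $\Phi(\tau,x,j)=W_x[h(f(X_\tau))\mathcal{S}^f_\tau(X):\tau<\tau_j]$ (lower bound, Lemma~\ref{lem:mainpart}), and then lets $j\downarrow 0$ in $\Phi$, which requires the equivalence $\tilde P_x^f\sim P_x^{(1/2)}$ and a conditional-law computation. You instead exploit the coupling directly: once $j<m_T:=\min_{u\le T}X^{(0)}_u$, the reflected path $X^{(j)}$ never reaches its floor on $[0,T]$ and hence \emph{coincides} with $X^{(0)}$ by pathwise uniqueness, so $h(f(X^{(j)}_{T-\tau}))\to h(f(X^{(0)}_{T-\tau}))$ a.s.; the comparison $X^{(j)}\ge X^{(0)}$ supplies the dominated-convergence bound. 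This bypasses Lemma~\ref{lem:mainpart} and the Schwarzian/Bessel measure changes entirely. What the paper's approach buys in return is the explicit subsolution $\Phi$, which is reused in Section~\ref{sec:scheme} to compare the proposed numerical scheme with that of \citet{SongYang2015}; your argument proves the theorem but does not isolate $\Phi$. One small wording point: ``evaluating at $s=0$'' in your PDE step gives a triviality---you mean that the finite-variation integral vanishes for all small $s$, hence the integrand is zero at the starting point by continuity.
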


In the following, we prove Theorem \ref{them:main} using several lemmas,
derive an expression for $\tilde{H}$,
and provide a sufficient condition for Assumption \ref{ass:smooth}.
First, we present the derivative price and replication strategy.
\begin{lemma}
Suppose that $\tilde{H}$ satisfies Assumption \ref{ass:smooth}.
Then, for all $t \le T$,
\begin{eqnarray}
h(f(X_{T})) = \tilde{V}_{t} + \int_{t}^{T} \tilde{H}_{y}(u,f(X_{u}),f(J_{u}^{*})) \left(\frac{f(J_{u}^{*})}{f(J_{0}^{*})} \right)^{2}d\tilde{Y}_{u} \label{eq:replication},
\end{eqnarray}
$\tilde{P}_{x}^{1/s}$-a.s.,
and
$\tilde{H}$ satisfies the partial differential equation (\ref{eq:pde_f}).
\end{lemma}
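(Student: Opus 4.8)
The plan is a Feynman--Kac computation on $(\Omega,(\tilde{\mathcal{F}}_{t}^{f}),\tilde{P}_{x}^{f})$, where $(X,J^{*})$ obeys the Skorokhod-type dynamics (\ref{eq:x_pf_xj}). First I would record two elementary consequences. Applying It\^o's formula to $f(X_{t})$ and using $T_{f}=f''/f'$, the drift $-\tfrac{1}{2}T_{f}(X_{t})\,dt$ is exactly cancelled by the second-order term $\tfrac{1}{2}f''(X_{t})\,dt$, so that
\[
df(X_{t}) = f'(X_{t})\,d\tilde{\beta}_{t} + 2f'(X_{t})\,dJ_{t}^{*} = f'(X_{t})\,d\tilde{\beta}_{t} + 2\,df(J_{t}^{*}),
\]
the last step because $J^{*}$ increases only on $\{X_{t}=J_{t}^{*}\}$, where $f'(X_{t})=f'(J_{t}^{*})$; in particular $d\langle f(X)\rangle_{t}=f'(X_{t})^{2}\,dt$. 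Since $\tilde{Y}_{t}=f(X_{t})\big(f(J_{0}^{*})/f(J_{t}^{*})\big)^{2}$ and the displayed SDE for $\tilde{Y}$ gives $d\tilde{Y}_{t}=f'(X_{t})\big(f(J_{0}^{*})/f(J_{t}^{*})\big)^{2}\,d\tilde{\beta}_{t}$, the driving martingale of $f(X)$ is a rescaling of that of $\tilde{Y}$: $f'(X_{t})\,d\tilde{\beta}_{t}=\big(f(J_{t}^{*})/f(J_{0}^{*})\big)^{2}\,d\tilde{Y}_{t}$, and also $\sigma(f(X_{t}))=(f'\circ f^{-1})(f(X_{t}))=f'(X_{t})$.

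Next I would set $\tilde{M}_{t}:=\tilde{H}(t,f(X_{t}),f(J_{t}^{*}))$ and apply It\^o's formula, which Assumption \ref{ass:smooth} (together with the regularity of $\tilde{H}$ up to the face $\{y=k\}$) permits. Because $f(J^{*})$ is of finite variation, $\langle f(X),f(J^{*})\rangle\equiv 0$ and $\langle f(J^{*})\rangle\equiv 0$, and the computation above yields
\[
d\tilde{M}_{t} = \Big(\tilde{H}_{t}+\tfrac{1}{2}\sigma(f(X_{t}))^{2}\,\tilde{H}_{yy}\Big)\,dt + \tilde{H}_{y}\,f'(X_{t})\,d\tilde{\beta}_{t} + \big(2\tilde{H}_{y}+\tilde{H}_{k}\big)\,df(J_{t}^{*}),
\]
all partials evaluated at $(t,f(X_{t}),f(J_{t}^{*}))$. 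On the other hand, by the Markov property of $(X,J^{*})$ under $\tilde{P}_{x}^{f}$ --- which follows from pathwise uniqueness for (\ref{eq:x_pf_xj}), i.e.\ from Proposition \ref{prop:skorokhod_sde} --- one has $\tilde{M}_{t}=\tilde{P}_{x}^{f}\big[h(f(X_{T}))\mid\tilde{\mathcal{F}}_{t}^{f}\big]$, hence $\tilde{M}$ is a $\tilde{P}_{x}^{f}$-martingale (its terminal value is integrable: when $f'<0$, $0\le f(X_{t})\le f(J_{0}^{*})$ by $J_{0}^{*}\le J_{t}^{*}\le X_{t}$ and $h$ has linear growth, and the general case is covered by the finiteness of $\tilde{H}$). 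By uniqueness of the semimartingale decomposition, the continuous finite-variation part of $\tilde{M}$ must vanish identically. Its absolutely continuous component forces $\tilde{H}_{t}+\tfrac{1}{2}\sigma^{2}\tilde{H}_{yy}=0$ for Lebesgue-a.e.\ $t$ along the path, which --- upgraded to all interior $(t,y,k)$ by continuity of the left side and the fact that the path enters a neighbourhood of any interior point with positive probability --- together with $\tilde{H}(T,y,k)=h(y)$ is exactly (\ref{eq:pde_f}); its singular component, carried by the Lebesgue-null time set $\{X_{t}=J_{t}^{*}\}$ on which $dJ^{*}$ acts like a local time, forces $2\tilde{H}_{y}+\tilde{H}_{k}=0$ on $\{y=k\}$ (this boundary relation is not needed here but is used in Section \ref{sec:bse}).

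Finally, with the finite-variation part gone, $d\tilde{M}_{t}=\tilde{H}_{y}\,f'(X_{t})\,d\tilde{\beta}_{t}=\tilde{H}_{y}(t,f(X_{t}),f(J_{t}^{*}))\big(f(J_{t}^{*})/f(J_{0}^{*})\big)^{2}\,d\tilde{Y}_{t}$; integrating over $[t,T]$ and using $\tilde{M}_{T}=\tilde{H}(T,f(X_{T}),f(J_{T}^{*}))=h(f(X_{T}))$ and $\tilde{M}_{t}=\tilde{H}(t,f(X_{t}),f(J_{t}^{*}))=\tilde{V}_{t}$ yields (\ref{eq:replication}) $\tilde{P}_{x}^{f}$-a.s., and hence $\tilde{P}_{x}^{1/s}$-a.s.\ since $\tilde{P}_{x}^{f}\sim\tilde{P}_{x}^{1/s}$ on $\tilde{\mathcal{F}}_{T}^{0}$ by Lemma \ref{lem:equivalence}. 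I expect the main obstacle to be the boundary analysis: justifying that $\{X=J^{*}\}$ has zero occupation time and that $dJ^{*}$ is a local-time-type (singular) measure, so that the absolutely continuous and singular parts of the drift can be treated separately; applying It\^o's formula to $\tilde{H}$ whose $C^{1,2}$ regularity holds only on the open interior $I_{k}'$ while the process does touch $\{y=k\}$; and promoting the pathwise vanishing of the drift to the pointwise PDE on the interior.
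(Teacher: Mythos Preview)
The central gap in your proposal is the global application of It\^o's formula in all three variables $(t,y,k)$, which requires $\tilde{H}_{k}$ to exist and to be continuous up to the face $\{y=k\}$. Assumption~\ref{ass:smooth} does not provide this: it only asserts that for each \emph{fixed} $k$, the map $(t,y)\mapsto\tilde{H}(t,y,k)$ is $C^{1,2}$ on the \emph{open} set $[0,T)\times I_{k}'$. No differentiability in $k$ is assumed, and no regularity at $y=k$ is assumed. The paper is explicit about this: the remark immediately following the proof observes that if one had the stronger regularity $\tilde{H}\in C^{1,2,1}([0,T)\times D)$, then It\^o's formula could be applied directly on $[0,T]$ and the Neumann-type relation $\tilde{H}_{k}(t,y,y)=-2\tilde{H}_{y}(t,y,y)$ would follow --- which is exactly the route you take. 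But the lemma is deliberately stated under the weaker hypothesis, and your argument does not go through under it; the parenthetical ``together with the regularity of $\tilde{H}$ up to the face $\{y=k\}$'' is an assumption you have added, not one you have at your disposal.

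The paper's proof circumvents this by applying It\^o's formula only on random intervals $[q,\theta_{q})$, $q\in\mathbb{Q}_{+}$, where $\theta_{q}=\inf\{u>q:J_{u}^{*}>J_{q}^{*}\}$. On each such interval $J^{*}$ is frozen at $J_{q}^{*}$, so $\tilde{H}$ is effectively a function of $(t,y)$ with $k=f(J_{q}^{*})$ fixed, and Assumption~\ref{ass:smooth} suffices. The residual process $\Lambda$ --- defined as $\tilde{M}$ minus the $(t,y)$-It\^o terms --- is then a continuous semimartingale that is constant on every interval of constancy of $J^{*}$. Since $\tilde{M}$ minus the $d\tilde{\beta}$-integral is a local martingale in the Brownian filtration and equals $\Lambda$ plus an absolutely continuous $du$-integral, and since the support of $dJ^{*}$ has zero Lebesgue measure $P_{x}^{(1/2)}$-a.s.\ (hence $\tilde{P}_{x}^{f}$-a.s.), both pieces must vanish. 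This localization device is precisely what resolves the obstacle you flag in your final paragraph, and it is the part of the argument that your proposal is missing.
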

\begin{proof}
Let $\tilde{M}_{t} = \tilde{H}(t,f(X_{t}),f(J_{t}^{*}))$.
Then, $\tilde{M}$ is a continuous $\tilde{P}_{x,j}^{f}$-martingale, because
\begin{eqnarray}
\tilde{M}_{t}
= \tilde{P}_{x,j}^{f}\left[h(f(X_{T})) \;\middle| \; \tilde{\mathcal{F}}_{t}^{f} \right]
\end{eqnarray}
for every $t \in [0,T]$.
Ito's formula leads to 
\begin{eqnarray}
\tilde{M}_{t}-\tilde{M}_{q}
&=&
\int_{q}^{t} \left(\tilde{H}_{t}(u,f(X_{u}),f(J_{u}^{*})) + \frac{1}{2} f^{\prime}(X_{u})^{2} \tilde{H}_{yy}(u,f(X_{u}),f(J_{u}^{*})) \right)du \nonumber\\
&+&
\int_{q}^{t} f^{\prime}(X_{u}) \tilde{H}_{y}(u,f(X_{u}),f(J_{u}^{*})) d\tilde{\beta}_{u}
\end{eqnarray}
for every $q \in \mathbb{Q}_{+}$
and $t \in [q,\theta_{q})$,
where $\theta_{q} = \inf \{u > q : J_{u}^{*} > J_{q}^{*} \}$.
Let $\Lambda$ be defined as
\begin{eqnarray}
\Lambda_{t} &:=& \tilde{M}_{t}-\tilde{M}_{0}
-
\int_{0}^{t} f^{\prime}(X_{u}) \tilde{H}_{y}(u,f(X_{u}),f(J_{u}^{*})) d\tilde{\beta}_{u} \nonumber\\
&-&
\int_{0}^{t} \left(\tilde{H}_{t}(u,f(X_{u}),f(J_{u}^{*})) + \frac{1}{2} f^{\prime}(X_{u})^{2} \tilde{H}_{yy}(u,f(X_{u}),f(J_{u}^{*})) \right)du.
\end{eqnarray}
Then,
$\Lambda$ is a continuous semimartingale that is constant on each interval on which $J^{*}$ is constant,
and
\begin{eqnarray}
&&\tilde{M}_{t}-\tilde{M}_{0}
-
\int_{0}^{t} f^{\prime}(X_{u}) \tilde{H}_{y}(u,f(X_{u}),f(J_{u}^{*})) d\tilde{\beta}_{u} \nonumber\\
&=&
\Lambda_{t}+\int_{0}^{t} \left(\tilde{H}_{t}(u,f(X_{u}),f(J_{u}^{*})) + \frac{1}{2} f^{\prime}(X_{u})^{2} \tilde{H}_{yy}(u,f(X_{u}),f(J_{u}^{*})) \right)du \label{eq:vanish}
\end{eqnarray}
is a local martingale with respect to the filtration generated by the Brownian motion $\tilde{\beta}$.
As $\{t \in [0,T] :dJ_{t}^{*}>0\}$ has a zero Lebesgue measure $P_{x}^{(1/2)}$-a.s., and thus, $\tilde{P}_{x}^{f}$-a.s,
(\ref{eq:vanish}) must vanish.
This leads to $\Lambda_{t} = 0$ and (\ref{eq:pde_f}).
\end{proof}
Note that, if $\tilde{H}$ is in $C^{1,2,1}([0,T)\times D)$,
Ito's formula can be used for the interval $[0,T]$
and
$\tilde{H}$ also satisfies $\tilde{H}_{k}(t,y,y) = -2 \tilde{H}_{y}(t,y,y)$.

We obtain two comparison results for the investor and fundraiser.
One concerns the sample paths and the other relates to expectations for a monotonic payoff function $h$.
For the latter result,
although we assume that $h$ is nondecreasing,
a similar result holds for a nonincreasing function $h$:
we obtain the opposite inequality of (\ref{eq:comp}) below 
under the assumption that $f^{\prime} > 0$ and $P_{x}^{f}[G_{\tau}>0]=1$,
or that $f^{\prime} < 0$, respectively.
\begin{lemma}
\label{lem:pathcomp}
Let $X^{\prime}$ be defined by a stochastic differential equation of the Skorokhod type:
\begin{eqnarray}
X_{t}^{\prime} = -2J_{0}^{\prime}+\beta_{t} - \int_{0}^{t} \frac{1}{2}T_{f}(X_{u}^{\prime}) du + 2J_{t}^{\prime},
\end{eqnarray}
where $\beta$ is defined by (\ref{eq:bm_f})
and $J_{0}^{\prime}$ is an arbitrary constant in $(0,x]$.
Subsequently, we obtain $X_{t} \le X_{t}^{\prime}$ for all $t \ge 0$, almost surely with respect to $P_{x}^{f}$.
\end{lemma}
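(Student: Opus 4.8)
The plan is to compare the two Skorokhod-type equations directly on the same probability space $(\Omega,(\mathcal{F}_t^f),P_x^f)$, using the same driving Brownian motion $\beta$ of (\ref{eq:bm_f}). Recall that under $P_x^f$ the process $X$ satisfies the ordinary (non-reflected) equation $X_t = x + \beta_t - \tfrac12\int_0^t T_f(X_u)\,du$, whereas $X'$ solves the reflected equation started at $-2J_0' + x \le x$ (in the $\tilde\beta$-type parametrization the reflection pushes $X'$ upward whenever $X' = J'$). The intuition is that $X$ is $X'$ with the reflection term switched off, so $X$ should stay below $X'$; the reflection local-time term $2J_t'$ is nondecreasing and only acts to raise $X'$, never to lower it, so it can only enlarge the gap $X' - X$.

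First I would set $\Delta_t := X_t' - X_t$ and write, for $t < T_\infty$,
\begin{eqnarray}
\Delta_t = \Delta_0 - \frac{1}{2}\int_0^t \bigl(T_f(X_u') - T_f(X_u)\bigr)\,du + 2J_t',
\end{eqnarray}
where $\Delta_0 = -2J_0' + x - x = -2J_0' < 0$ is not quite the configuration I want, so I would instead run the comparison from the first time the two processes meet. More cleanly: I would invoke the comparison theorem for one-dimensional SDEs of Skorokhod type (e.g.\ as in \citet{SaishoTanemura1990}, or via a standard Tanaka/local-time argument), which applies because both equations are driven by the same Brownian motion, the drift coefficient $-\tfrac12 T_f$ is locally Lipschitz on $(0,\infty)$ by Assumption \ref{ass:f} (since $f\in\mathcal{D}$ means $f'\neq0$ and $f\in C^3$, so $T_f = f''/f'$ is $C^1$), and the reflecting boundary term in the $X'$-equation is nondecreasing. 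The key inequality to establish is that $\Delta_t^+ := (X_t' - X_t)^+$ satisfies $d\Delta_t^+ \le 0$ on $\{\Delta_t = 0\}$ away from the reflection set: on $\{X_u' = X_u\}$ the drift difference vanishes and the reflection term $dJ_u'$ is $\ge 0$, so $X'$ can only move up relative to $X$; hence once $X' \ge X$ it stays so. Since at $t=0$ we need $X_0' = -2J_0' + x$, but wait — the statement has $X_0' = X_0$ forced by the equation only through $\tilde\beta_0$; rereading, $\beta_0 = 0$ and $J_0' = J_0'$, so $X_0' = -2J_0' + 0 + 2J_0' = 0$... Actually the clean reading is that $X_0' = -2J_0' + \beta_0 + 2J_0' = \beta_0$; matching conventions with (\ref{eq:x_pf_xj}) where $\tilde\beta_0 = x - 2J_0^*$, one has $X_0' = x$ as well. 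So $\Delta_0 = 0$ and the comparison starts from equality.

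The main steps, then, are: (i) verify $\Delta_0 = 0$ and that both processes are well-defined up to $T_\infty$ (using Assumptions \ref{ass:f} and \ref{ass:equivalence} and Proposition \ref{prop:skorokhod_sde} for $X'$); (ii) apply Tanaka's formula to $\Delta_t^+$, obtaining
\begin{eqnarray}
\Delta_t^+ = -\frac{1}{2}\int_0^t \mathbf{1}_{\{\Delta_u>0\}}\bigl(T_f(X_u') - T_f(X_u)\bigr)\,du + \int_0^t \mathbf{1}_{\{\Delta_u>0\}}\,dJ_u' - \frac{1}{2}\ell_t^0(\Delta),
\end{eqnarray}
where $\ell^0(\Delta)$ is the local time of $\Delta$ at $0$ (nonnegative); (iii) bound the drift-difference integral using a local Lipschitz constant $K$ on each $I_n$, giving $\int_0^t \mathbf{1}_{\{\Delta_u>0\}}|T_f(X_u')-T_f(X_u)|\,du \le K\int_0^t \Delta_u^+\,du$; (iv) observe the reflection integral $\int_0^t \mathbf{1}_{\{\Delta_u>0\}}\,dJ_u'$ vanishes, because $dJ_u'$ is supported on $\{X_u' = J_u'\}$, and there — since $X_u \le X_u'$ would be what we're proving — one shows $\Delta_u$ cannot be strictly positive at a reflection instant while... hmm, this needs care: actually $dJ'$ charges $\{X' = J'\}$, and at such times if $\Delta_u > 0$ then $X_u < X_u' = J_u' \le J_u'$, which is consistent, so this integral need NOT vanish; but crucially it is $\ge 0$, and appears with a $+$ sign, so keeping it only helps — I would simply drop it by the inequality $\Delta_t^+ \le K\int_0^t \Delta_u^+\,du + \int_0^t dJ_u'$... no. Let me restructure: the honest route is (iv') keep the reflection term with its favorable sign, so $\Delta_t^+ \le \tfrac12 K \int_0^t \Delta_u^+\,du + \int_0^t \mathbf{1}_{\{\Delta_u > 0\}}\,dJ_u'$, then argue the last term is zero by a support argument — $J'$ increases only when $X' = J' \le X'$, and I claim that at such a time $X_u \ge J_u'$ too (since $X$ has no lower reflection and $J'$ is the running infimum-type barrier for $X'$ only), which would force $\Delta_u \le X_u' - J_u' = 0$ there; (v) conclude by Grönwall that $\Delta_t^+ \equiv 0$, i.e.\ $X_t \le X_t'$ for all $t < T_\infty$, and extend past $T_\infty$ trivially since $X_{T_\infty} = 0$.

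The hard part will be step (iv): making rigorous the claim that the upward reflection of $X'$ at its barrier $J'$ never occurs while $X'$ is strictly above $X$ — equivalently, handling the interaction between the reflection local time and the sign of $\Delta$. The cleanest fix is probably to avoid isolating that term: instead run the comparison argument at the level of the pair $(X, X')$ through a Skorokhod-map monotonicity property — the Skorokhod reflection map is monotone in its input path — by writing $X$ itself as a degenerate Skorokhod problem (with zero boundary pushing, since $X > 0$ and never hits a reflecting barrier under $P_x^f$), and then quoting that the solution map of (\ref{eq:skorokhod}) is monotone with respect to the initial barrier level $l_0$, which is exactly the content of the well-posedness result referenced after Proposition \ref{prop:skorokhod_sde}. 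Comparing $l_0 = 0^+$ (no reflection, giving $X$) against $l_0 = J_0'$ (giving $X'$) yields $X \le X'$ pathwise. I expect this monotonicity-of-the-Skorokhod-map formulation to be the shortest rigorous path, with the Tanaka-formula computation above serving as the fallback if a direct citation is unavailable.
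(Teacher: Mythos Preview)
Your overall strategy --- Tanaka-type formula, local Lipschitz bound on $T_f$, Gr\"onwall --- is exactly the paper's argument, but you apply it to the wrong quantity. You set $\Delta = X' - X$ and try to show $\Delta^{+}=0$; that would give $X' \le X$, the opposite of the claim. The paper instead takes $\Delta_t^{+} = (X_t - X_t')_{+}$ and shows \emph{this} vanishes. With that sign, the difficulty you wrestle with in step (iv) disappears: since both processes are driven by the same $\beta$, the difference $X - X'$ is of finite variation (no local-time term is needed), and
\[
(X_t - X_t')_{+} \;=\; -\tfrac{1}{2}\int_0^t \mathbf{1}_{\{X_u \ge X_u'\}}\bigl(T_f(X_u)-T_f(X_u')\bigr)\,du \;-\; 2\int_0^t \mathbf{1}_{\{X_u \ge X_u'\}}\,dJ_u'.
\]
The reflection integral now carries a \emph{minus} sign and $dJ' \ge 0$, so it is nonpositive and can simply be dropped --- no support argument, no ``hard part''. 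A local Lipschitz bound on $T_f$ over the compact $K = [J_0', \sup_{u\le t} X_u]$ and Gr\"onwall finish the proof in two lines.

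So the fix is just to swap the roles of $X$ and $X'$ in your Tanaka computation; your fallback via monotonicity of the Skorokhod map is valid but unnecessary.
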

\begin{proof}
Let $\Delta_{t}^{+} = (X_{t}-X_{t}^{\prime})_{+}$.
Then, we obtain
\begin{eqnarray}
\Delta_{t}^{+}
&=& -\frac{1}{2} \int_{0}^{t} 1_{\{X_{u} \ge X_{u}^{\prime}\}} \left(T_{f}(X_{u})-T_{f}(X_{u}^{\prime}) \right) du - 2\int_{0}^{t} 1_{\{X_{u} \ge X_{u}^{\prime}\}} dJ_{u}^{\prime} \nonumber\\
&\le& \frac{1}{2}\left(\sup_{\xi \in K} |T_{f}^{\prime}(\xi)| \right) \int_{0}^{t} 1_{\{X_{u} \ge X_{u}^{\prime}\}} |X_{u}-X_{u}^{\prime}| du,
\end{eqnarray}
where $K$ denotes the interval $K = [J_{0}^{\prime},\sup_{u \le t} X_{u}]$.
The conclusion follows from Gronwall's inequality.
\end{proof}

\begin{lemma}
\label{lem:comp}
Suppose that $h$ is nondecreasing.
Then,
if $f^{\prime} < 0$ and $P_{x}^{f}[G_{\tau}>0]=1$, we obtain
\begin{eqnarray}
  P_{x}^{f}\left[h(f(X_{\tau})) : G_{\tau}>0 \right]
  \ge
 \tilde{P}_{x,j}^{f}\left[h(f(X_{\tau})) \right]\label{eq:comp}
\end{eqnarray}
for all $x,j$ with $x \ge j>0$,
and if $f^{\prime} > 0$, we obtain the opposite inequality to (\ref{eq:comp}).
\end{lemma}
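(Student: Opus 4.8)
The approach is to reduce the comparison of the two expectations to the pathwise inequality already established in Lemma~\ref{lem:pathcomp}, after which only the monotonicity of $f$ and $h$ and the nonnegativity of $h$ enter. Concretely, I would take the auxiliary process $X^{\prime}$ of Lemma~\ref{lem:pathcomp} with the constant $J_{0}^{\prime}$ chosen to be exactly $j$ (legitimate since $j\in(0,x]$), show that $X_{\tau}^{\prime}$ under $P_{x}^{f}$ has the same law as $X_{\tau}$ under $\tilde{P}_{x,j}^{f}$, and then transport the pathwise bound $X_{\tau}\le X_{\tau}^{\prime}$ through $f$ and $h$.

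The step that needs the most care is the distributional identification of $X^{\prime}$ with $X$ under $\tilde{P}_{x,j}^{f}$. Writing $(\chi,l)=(X^{\prime}-J^{\prime},J^{\prime})$, the process $X^{\prime}$ solves the Skorokhod-type equation of Proposition~\ref{prop:skorokhod_sde} with reflection data $(\chi_{0},l_{0})=(x-j,j)$, driven by $\beta-2j$, which under $P_{x}^{f}$ is a Brownian motion started at $x-2j$; the shift $-2J_{0}^{\prime}$ in the definition of $X^{\prime}$ is precisely what reconciles $\beta_{0}=x$ with this initial value. On the other hand, under $\tilde{P}_{x,j}^{f}$ the process $X$ solves~(\ref{eq:x_pf_xj}) with $J_{0}^{*}=j$, i.e.\ the same Skorokhod-type equation with the same reflection data, driven by $\tilde{\beta}$, a Brownian motion started at $x-2j$. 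Since Proposition~\ref{prop:skorokhod_sde} produces a pathwise-unique solution that is a measurable functional of the driving path, and Assumption~\ref{ass:equivalence} on $f$ rules out explosion for either process, $(X^{\prime},J^{\prime})$ under $P_{x}^{f}$ and $(X,J^{*})$ under $\tilde{P}_{x,j}^{f}$ have the same law; in particular $P_{x}^{f}[h(f(X_{\tau}^{\prime}))]=\tilde{P}_{x,j}^{f}[h(f(X_{\tau}))]$.

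Granting this, I would finish by cases using Lemma~\ref{lem:pathcomp}, which gives $X_{\tau}\le X_{\tau}^{\prime}$ $P_{x}^{f}$-a.s. If $f^{\prime}<0$, then $f$ is decreasing, so $f(X_{\tau})\ge f(X_{\tau}^{\prime})$, and $h$ nondecreasing gives $h(f(X_{\tau}))\ge h(f(X_{\tau}^{\prime}))$ pathwise; taking $P_{x}^{f}$-expectations and using the identification above, $P_{x}^{f}[h(f(X_{\tau}))]\ge\tilde{P}_{x,j}^{f}[h(f(X_{\tau}))]$, and since $h\ge 0$ and $P_{x}^{f}[G_{\tau}>0]=1$ the left side equals $P_{x}^{f}[h(f(X_{\tau})):G_{\tau}>0]$, which is~(\ref{eq:comp}). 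If $f^{\prime}>0$, then $f$ is increasing, so $h(f(X_{\tau}))\le h(f(X_{\tau}^{\prime}))$ pathwise, and $h\ge 0$ gives $P_{x}^{f}[h(f(X_{\tau})):G_{\tau}>0]\le P_{x}^{f}[h(f(X_{\tau}))]\le\tilde{P}_{x,j}^{f}[h(f(X_{\tau}))]$, the reverse of~(\ref{eq:comp})---and note that here the restriction to $\{G_{\tau}>0\}$ is not needed, which is consistent with the absence of that hypothesis in the statement.
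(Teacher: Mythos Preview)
Your proposal is correct and follows essentially the same approach as the paper's proof: introduce $X'$ from Lemma~\ref{lem:pathcomp} with $J_{0}^{\prime}=j$, identify its law under $P_{x}^{f}$ with that of $X$ under $\tilde{P}_{x,j}^{f}$, and then transport the pathwise inequality $X_{\tau}\le X_{\tau}^{\prime}$ through $f$ and $h$ case by case. You supply more detail than the paper on the distributional identification (via the Skorokhod representation and pathwise uniqueness from Proposition~\ref{prop:skorokhod_sde}), which is a welcome clarification rather than a departure.
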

\begin{proof}
Let $X^{\prime}$ be as in Lemma \ref{lem:pathcomp} with $J_{0}^{\prime} = j$.
Subsequently, the law of $X^{\prime}$ under $P_{x}^{f}$ is the same as that of $X$ under $\tilde{P}_{x,j}^{f}$.
If $f^{\prime} < 0$, we obtain $f(X_{\tau}^{\prime}) \le f(X_{\tau})$.
Owing to $P_{x}^{f}[G_{\tau}>0]=1$,
we obtain
\begin{eqnarray}
  P_{x}^{f}\left[h(f(X_{\tau})) : G_{\tau}>0 \right]
  =
  P_{x}^{f}\left[h(f(X_{\tau})) \right]
  \ge
  P_{x}^{f}\left[h(f(X_{\tau}^{\prime})) \right],
\end{eqnarray}
which is equal to the right-hand side of (\ref{eq:comp}).
Similarly, if $f^{\prime} > 0$, we obtain $f(X_{\tau}^{\prime}) \ge f(X_{\tau})$
and
\begin{eqnarray}
  P_{x}^{f}\left[h(f(X_{\tau})) : G_{\tau}>0 \right]
  \le
  P_{x}^{f}\left[h(f(X_{\tau})) \right]
  \le
  P_{x,j}^{f}\left[h(f(X_{\tau}^{\prime})) \right],
\end{eqnarray}
which is equal to the right-hand side of (\ref{eq:comp}).
\end{proof}

Counter-intuitively, 
the delta,
which is the derivative of $\tilde{H}$ with respect to $y$ multiplied by $(f(J^{*})/f(J_{0}^{*}))^{2}$,
is not necessarily positive, even if the payoff $h$ increases.
This is partly because the fundraiser knows that $y=f(j)$ is a reflecting boundary.
Consequently, the pathwise comparison result for $X$ that is defined by (\ref{eq:x_pf_xj}) with a different $x$ does not hold.
See Section \ref{sec:-1/2} for further details.

We show that the expectation of $h(f(X_{\tau}))$ on $\{\tau < \tau_{j}\}$
with respect to $\tilde{P}_{x,j}^{f}$ converges to $P_{x}^{f}[h(f(X_{\tau}))]$.
\begin{lemma}
\label{lem:mainpart}
For $x,j$ with $x \ge j > 0$, we obtain
\begin{eqnarray}
\tilde{P}_{x,j}^{f}[h(f(X_{\tau}))]
\ge W_{x}[h(f(X_{\tau})) \mathcal{S}_{\tau}^{f}(X) : \tau < \tau_{j} ], \label{eq:mainpart}
\end{eqnarray}
where $\tau_{j} = \inf \{t > 0 \; : X_{t} = j \}$.
\end{lemma}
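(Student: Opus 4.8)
The plan is to exploit the fact that, before the future‑infimum process $J^{*}$ first increases, the law of $X$ under $\tilde{P}_{x,j}^{f}$ is exactly the law of the $\mathcal{G}^{f}$‑diffusion started at $x$, and that this is the diffusion produced from the Wiener measure $W_{x}$ by the Schwarzian change of measure $\mathcal{S}^{f}(X)$; the event $\{\tau<\tau_{j}\}$ on the right of (\ref{eq:mainpart}) is precisely the event that this has not yet happened by time $\tau$. Throughout, $\tau\ge 0$ denotes the fixed time horizon of $\tilde H$, and for a level $L>0$ we write $\tau_{L}:=\inf\{t\ge 0:X_{t}=L\}$, viewed as a functional of the path of $X$ (so it makes sense under either $W_{x}$ or $\tilde{P}_{x,j}^{f}$).

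First I would pin down the dynamics of $X$ under $\tilde{P}_{x,j}^{f}$ up to the first increase of $J^{*}$. By (\ref{eq:x_pf_xj}), equivalently Proposition \ref{prop:skorokhod_sde} applied with $(\chi,l)=(X-J^{*},J^{*})$ and driver $\tilde{\beta}$ (a Brownian motion with $\tilde{\beta}_{0}=x-2j$), the process $J^{*}$ is nondecreasing, increases only on $\{X=J^{*}\}$, and starts from $J_{0}^{*}=j<x=X_{0}$. Hence $J^{*}\equiv j$ on $[0,\tau_{j})$, so there $X_{t}=x+(\tilde{\beta}_{t}-\tilde{\beta}_{0})-\int_{0}^{t}\tfrac12 T_{f}(X_{u})\,du$: that is, $(X_{t\wedge\tau_{j}})_{t\ge 0}$ under $\tilde{P}_{x,j}^{f}$ is the $\mathcal{G}^{f}$‑diffusion started at $x$ and stopped when it first hits $j$.

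Next I would transport the right‑hand side of (\ref{eq:mainpart}) onto this stopped diffusion by a localized Girsanov argument. Fix $N>x$ and $R>\tau$ and set $\rho:=\tau_{j}\wedge\tau_{N}\wedge R$. Under $W_{x}$ (where $X$ is Brownian) the stopped Schwarzian process $\mathcal{S}^{f,\rho}(X)$ is a bounded martingale starting at $1$, so $\mathbb{Q}:=\mathcal{S}^{f}_{\rho}(X)\cdot W_{x}$ is a probability measure under which, by Girsanov and the exponential form of $\mathcal{S}^{f}$, the process $(X_{t\wedge\rho})_{t\ge 0}$ solves $dX=dB-\tfrac12 T_{f}(X)\,dt$ from $x$, stopped at $\rho$. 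By pathwise uniqueness for this equation (cf.\ (\ref{eq:s_sde})) and the previous step, the path law of $(X_{t\wedge\rho})_{t\ge 0}$ under $\mathbb{Q}$ coincides with that of $(X_{t\wedge\tau_{j}\wedge\tau_{N}\wedge R})_{t\ge 0}$ under $\tilde{P}_{x,j}^{f}$. Since $\tau<R$, the event $\{\tau<\tau_{j}\wedge\tau_{N}\}=\{\tau<\rho\}$ is $\mathcal{F}^{0}_{\tau\wedge\rho}$‑measurable and $\mathcal{S}^{f}_{\tau}(X)=\mathcal{S}^{f}_{\tau\wedge\rho}(X)$ there, so
\[
W_{x}\!\left[h(f(X_{\tau}))\,\mathcal{S}^{f}_{\tau}(X):\tau<\tau_{j}\wedge\tau_{N}\right]=\mathbb{Q}\!\left[h(f(X_{\tau})):\tau<\rho\right]=\tilde{P}_{x,j}^{f}\!\left[h(f(X_{\tau})):\tau<\tau_{j}\wedge\tau_{N}\right].
\]
Letting $N\uparrow\infty$: a Brownian motion does not explode, so $\tau_{N}\uparrow\infty$ and $\{\tau<\tau_{j}\wedge\tau_{N}\}\uparrow\{\tau<\tau_{j}\}$ $W_{x}$‑a.s., whence by monotone convergence ($h\ge 0$) the left side tends to $W_{x}[h(f(X_{\tau}))\mathcal{S}^{f}_{\tau}(X):\tau<\tau_{j}]$; on the right side $\{\tau<\tau_{j}\wedge\tau_{N}\}\uparrow\{\tau<\tau_{j}\wedge\tau_{\infty}\}\subseteq\{\tau<\tau_{j}\}$ with $\tau_{\infty}:=\lim_{N}\tau_{N}$, so it increases to $\tilde{P}_{x,j}^{f}[h(f(X_{\tau})):\tau<\tau_{j}\wedge\tau_{\infty}]\le\tilde{P}_{x,j}^{f}[h(f(X_{\tau}))]$, which is (\ref{eq:mainpart}). (When $X$ is non‑explosive under $\tilde{P}_{x,j}^{f}$, e.g.\ under Assumption \ref{ass:equivalence}, one gets $\tilde{P}_{x,j}^{f}[h(f(X_{\tau})):\tau<\tau_{j}]$ on the right, but this refinement is not needed here.)

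The delicate point is that $\mathcal{S}^{f}(X)$ is merely a nonnegative local martingale under $W_{x}$, so one cannot change measure directly on $\mathcal{F}^{0}_{\tau\wedge\tau_{j}}$: the space‑time localization at $\rho$ (which makes $\mathcal{S}^{f,\rho}(X)$ bounded, hence a genuine martingale) together with the monotone‑convergence passage — legitimate because the payoff is nonnegative — is what makes the identity above rigorous. Everything else is the identification of laws before the barrier $j$ is reached, which is immediate from the Skorokhod structure of $(X,J^{*})$ and pathwise uniqueness for equations of the form (\ref{eq:s_sde}).
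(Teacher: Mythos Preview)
Your proof is correct and takes a genuinely different route from the paper's. The paper passes through the three-dimensional Bessel law $P_{x}^{(1/2)}$: using Lemma \ref{lem:equivalence} it writes $\tilde{P}_{x,j}^{f}[h(f(X_{\tau}))]=P_{x}^{(1/2)}[h(f(X_{\tau}))\,\tilde{\mathcal{S}}_{\tau}^{f}(X)\mid J_{0}=j]$, drops to $\{\tau<\tau_{j}\}$ (where $J^{*}\equiv j$, so $\tilde{\mathcal{S}}^{f}=\mathcal{S}^{f}$), and then invokes Williams' path decomposition of the Bessel process --- conditionally on $J_{0}=j$, the path of $X$ up to $\tau_{j}$ is standard Brownian motion from $x$ --- to replace the conditional Bessel expectation by the $W_{x}$-expectation. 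You bypass both the Bessel intermediary and Williams entirely: you read off directly from the Skorokhod equation (\ref{eq:x_pf_xj}) that, before $\tau_{j}$, the process $X$ under $\tilde{P}_{x,j}^{f}$ is the $\mathcal{G}^{f}$-diffusion from $x$, and then match this with the law obtained from $W_{x}$ by the localized Schwarzian change of measure $\mathcal{S}^{f}$, appealing to pathwise uniqueness (the drift $-\tfrac12 T_{f}$ is $C^{1}$, hence locally Lipschitz). The paper's route is shorter because it leverages machinery already set up (the equivalence of $\tilde{P}_{x}^{f}$ and $P_{x}^{(1/2)}$ and the classical decomposition); yours is more self-contained, makes the localization needed to tame the merely local martingale $\mathcal{S}^{f}(X)$ explicit, and would transfer to situations where no reference measure with a Williams-type decomposition is at hand.
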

\begin{proof}
  The equivalence between $\tilde{P}_{x}^{f}$ and $P_{x}^{(1/2)}$ on $\tilde{\mathcal{F}}_{\tau}^{0}$
  and $J_{t} = J_{0}$ for $t < \tau_{j}$
  lead to
\begin{eqnarray}
\tilde{P}_{x,j}^{f}[h(f(X_{\tau}))]
&=&
P_{x}^{(1/2)}[h(f(X_{\tau})) \tilde{\mathcal{S}}_{\tau}^{f}(X) | J_{0}=j ] \nonumber\\
&\ge&
P_{x}^{(1/2)}[h(f(X_{\tau})) \mathcal{S}_{\tau}^{f}(X) 1_{\{\tau < \tau_{j}\}} | J_{0}=j ] \nonumber\\
&=&
W_{x}[h(f(X_{\tau})) \mathcal{S}_{\tau}^{f}(X) : \tau < \tau_{j}].
\end{eqnarray}
\end{proof}

Theorem \ref{them:main} is completed when the following lemma has been established:
\begin{lemma}
\label{lem:convergence}
Suppose that $h$ is nondecreasing,
$f^{\prime}<0$, and $P_{x}^{f}[T_{\infty} = \infty]=1$.
Then, we obtain
\begin{eqnarray}
P_{x}^{f} [h(f(X_{\tau}))] = \lim_{j \downarrow 0} \tilde{P}_{x,j}^{f} [h(f(X_{\tau}))].
\end{eqnarray}
\end{lemma}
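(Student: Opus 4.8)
The plan is to prove the identity by sandwiching $\tilde{P}_{x,j}^{f}[h(f(X_{\tau}))]$ between an upper bound coming from the pathwise comparison (Lemma~\ref{lem:pathcomp}) and a lower bound coming from Lemma~\ref{lem:mainpart}, and then showing that both converge to $P_{x}^{f}[h(f(X_{\tau}))]$ as $j\downarrow 0$. Two consequences of the hypothesis $P_{x}^{f}[T_{\infty}=\infty]=1$ will be used repeatedly: first, since $G$ is continuous and strictly positive on $[0,T_{\infty})$, we have $P_{x}^{f}[G_{\tau}>0]=1$; second, since $T_{\infty}$ is the exit time of $X$ from $(0,\infty)$, the process $X$ stays in $(0,\infty)$ on the deterministic interval $[0,\tau]$, so $P_{x}^{f}$-a.s. $\inf_{u\le\tau}X_{u}>0$.

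For the upper bound I would invoke Lemma~\ref{lem:comp} in the case $f^{\prime}<0$: since $h$ is nondecreasing and $P_{x}^{f}[G_{\tau}>0]=1$, for every $j\in(0,x]$
\[
\tilde{P}_{x,j}^{f}[h(f(X_{\tau}))]\ \le\ P_{x}^{f}[h(f(X_{\tau})):G_{\tau}>0]\ =\ P_{x}^{f}[h(f(X_{\tau}))],
\]
hence $\limsup_{j\downarrow 0}\tilde{P}_{x,j}^{f}[h(f(X_{\tau}))]\le P_{x}^{f}[h(f(X_{\tau}))]$. (Equivalently, one may couple $X$ with the reflected process $X^{\prime}$ of Lemma~\ref{lem:pathcomp} for $J_{0}^{\prime}=j$: then $X_{\tau}\le X_{\tau}^{\prime}$, so $f(X_{\tau})\ge f(X_{\tau}^{\prime})$ and $h(f(X_{\tau}))\ge h(f(X_{\tau}^{\prime}))$, and the law of $X^{\prime}$ under $P_{x}^{f}$ is that of $X$ under $\tilde{P}_{x,j}^{f}$.)

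For the lower bound I would start from Lemma~\ref{lem:mainpart},
\[
\tilde{P}_{x,j}^{f}[h(f(X_{\tau}))]\ \ge\ W_{x}[h(f(X_{\tau}))\,\mathcal{S}_{\tau}^{f}(X):\tau<\tau_{j}],\qquad \tau_{j}=\inf\{t>0:X_{t}=j\},
\]
and then identify the right-hand side with $P_{x}^{f}[h(f(X_{\tau})):\tau<\tau_{j}]$. The point is that on $\{\tau<\tau_{j}\}$ the path $(X_{u})_{u\le\tau}$ is confined to a compact subinterval of $(0,\infty)$, so $\{\tau<\tau_{j}\}\subseteq\{\tau<T_{\infty}\}$ and $\{\tau<\tau_{j}\}\cap\{\tau<T_{n}\}\uparrow\{\tau<\tau_{j}\}$ as $n\to\infty$. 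On $\mathcal{F}_{T_{n}}^{0}$ one has $P_{x}^{f}=\mathcal{S}_{T_{n}}^{f}(X)\,W_{x}$, and $\mathcal{S}_{\cdot\wedge T_{n}}^{f}(X)$ is a true $W_{x}$-martingale because $T_{f}$ is bounded on the compact set $\bar{I}_{n}\subset(0,\infty)$; optional sampling at $\tau\wedge T_{n}$, together with the fact that $\tau\wedge T_{n}=\tau$ on $\{\tau<\tau_{j}\}\cap\{\tau<T_{n}\}$, gives
\[
P_{x}^{f}[h(f(X_{\tau})):\tau<\tau_{j},\ \tau<T_{n}]\ =\ W_{x}[h(f(X_{\tau}))\,\mathcal{S}_{\tau}^{f}(X):\tau<\tau_{j},\ \tau<T_{n}].
\]
Letting $n\to\infty$ by monotone convergence (both integrands are nonnegative, $P_{x}^{f}[\tau<T_{\infty}]=1$, and $\{\tau<\tau_{j}\}\subseteq\{\tau<T_{\infty}\}$) yields $W_{x}[h(f(X_{\tau}))\mathcal{S}_{\tau}^{f}(X):\tau<\tau_{j}]=P_{x}^{f}[h(f(X_{\tau})):\tau<\tau_{j}]$.

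Finally I would let $j\downarrow 0$. Since $X$ is continuous and strictly positive on $[0,\tau]$ under $P_{x}^{f}$, the events $\{\tau<\tau_{j}\}$ increase to a set of full $P_{x}^{f}$-measure, so monotone convergence gives $P_{x}^{f}[h(f(X_{\tau})):\tau<\tau_{j}]\uparrow P_{x}^{f}[h(f(X_{\tau}))]$; as $j\mapsto W_{x}[h(f(X_{\tau}))\mathcal{S}_{\tau}^{f}(X):\tau<\tau_{j}]$ is nonincreasing, combining with the lower bound gives $\liminf_{j\downarrow 0}\tilde{P}_{x,j}^{f}[h(f(X_{\tau}))]\ge P_{x}^{f}[h(f(X_{\tau}))]$, which together with the upper bound proves the claim. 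The main obstacle is the middle step: passing from an expectation under $W_{x}$ weighted by the (generally strict) local martingale $\mathcal{S}^{f}(X)$ to an expectation under $P_{x}^{f}$ requires controlling the mass that $\mathcal{S}^{f}(X)$ leaks to the boundary, which is precisely why the identification is carried out on the sets $\{\tau<\tau_{j}\}\cap\{\tau<T_{n}\}$, on which $X$ lives in a compact set and $\mathcal{S}^{f}(X)$ is a genuine martingale, before the limits in $n$ and in $j$ are taken; the remaining steps are routine applications of monotone convergence.
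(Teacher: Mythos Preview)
Your proof is correct and follows the same sandwich argument as the paper: Lemma~\ref{lem:comp} gives the upper bound, Lemma~\ref{lem:mainpart} the lower bound, and the identification of the $W_{x}$-expectation weighted by $\mathcal{S}_{\tau}^{f}(X)$ with a $P_{x}^{f}$-expectation is carried out via the change of measure on the localizing sets $\{\tau<T_{n}\}$ followed by monotone convergence. The only cosmetic difference is the order of the limits: the paper first sends $j\downarrow 0$ (so $\{\tau<\tau_{j}\}\uparrow\{\tau<\tau_{0}\}$) and then passes through the sequence $\{\tau<T_{n}\}$, whereas you fix $j$, let $n\to\infty$, and only then let $j\downarrow 0$; both orders are justified by nonnegativity and monotonicity.
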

\begin{proof}
According to Lemmas \ref{lem:comp} and \ref{lem:mainpart}, 
for all $j>0$, we obtain
\begin{eqnarray}
P_{x}^{f}[h(f(X_{\tau}))]
&\ge&
\tilde{P}_{x,j}^{f}[h(f(X_{\tau}))] \nonumber \\
&\ge&
W_{x}[h(f(X_{\tau})) \mathcal{S}_{\tau}^{f}(X) : \tau < \tau_{j} ].
\end{eqnarray}
As $j \downarrow 0$, the final quantity converges to
\begin{eqnarray}
W_{x}[h(f(X_{\tau})) \mathcal{S}_{\tau}^{f}(X) : \tau < \tau_{0} ]
&=&
\lim_{n \rightarrow \infty} W_{x}[h(f(X_{\tau})) \mathcal{S}_{\tau}^{f}(X) : \tau  < T_{n} ] \nonumber \\
&=&
\lim_{n \rightarrow \infty}  P_{x}^{f}[h(f(X_{\tau})) : \tau < T_{n}] \nonumber \\
&=&
P_{x}^{f}[h(f(X_{\tau}))].
\end{eqnarray}
\end{proof}

We derive the boundary condition at $(T-\tau,f(j))$ that $\tilde{H}(\cdot,\cdot,f(j))$ should satisfy; that is, $\Theta_{j}^{h}(\tau)$ below.
We remark that the stochastic differential equation (\ref{eq:singular}) with a singular drift coefficient in the following lemma has a nonnegative unique solution (Theorem 1.1 of \citet{SaishoTanemura1990}).
\begin{lemma}
For each $j>0$,
let $P_{0}^{1/\tilde{f}_{j}}$ be the law of
the diffusion $X^{(j)}$ defined by
\begin{eqnarray}
X_{t}^{(j)} = \gamma_{t} - \int_{0}^{t} \frac{1}{2}T_{1/\tilde{f}_{j}}(X_{u}^{(j)})du, \label{eq:singular}
\end{eqnarray}
where $\gamma$ is a Brownian motion with $\gamma_{0}=0$ and $\tilde{f}_{j}=f_{j}(0)-f_{j}$.
Then, we obtain
\begin{eqnarray}
\Theta_{j}^{h}(\tau)
:= P_{0}^{1/\tilde{f}_{j}}[h(f_{j}(X_{\tau}))]
= \int_{0}^{\infty} \frac{q_{j}^{h}(x)}{x} \theta_{j}(\tau,x) p^{(1/2)}(\tau,x) dx, \label{eq:boundary_value}
\end{eqnarray}
where $p^{(1/2)}(\tau,x) := \frac{2x^{2}}{\sqrt{2\pi \tau^{3}}} \mathrm{e}^{-\frac{x^{2}}{2\tau}}$ is the density of the three-dimensional Bessel process that starts at $0$,
and
\begin{eqnarray}
q_{j}^{h}(x)
&=& 
\frac{h(f_{j}(x))}{f_{j}(x)} \left(1-\frac{f_{j}(x)}{f_{j}(0)}\right) \frac{1}{\sqrt{(1/f_{j})^{\prime}(0)(1/f_{j})^{\prime}(x)}},\\
\theta_{j}(\tau,x) &=& P_{0}^{(1/2)}\left[\mathrm{e}^{\frac{1}{4}\int_{0}^{\tau}S_{f_{j}}(X_{u})du} \;\middle| \; X_{\tau}=x \right].
\end{eqnarray}
\end{lemma}
\begin{proof}
For each $j>0$, $\chi_{\cdot}:=X_{\cdot + \tau_{j}}-j$ satisfies
\begin{eqnarray}
\chi_{t} = \tilde{\beta}_{t} - \int_{0}^{t} \frac{1}{2}T_{f}(\chi_{u}+j)du+2(J_{t+\tau_{j}}^{*}-j).
\end{eqnarray}
Theorem 1.1 of \citet{SaishoTanemura1990} shows 
that $\chi$ is the unique nonnegative solution to (\ref{eq:singular})
for some Brownian motion $\gamma$ with $\gamma_{0}=0$, independent of $\tilde{\mathcal F}_{\tau_{j}}^{0}$.
Then, we obtain
\begin{eqnarray}
\tilde{P}_{x,j}^{f}[h(f(X_{\tau_{j}+\tau})) | \tilde{\mathcal F}_{\tau_{j}}^{0}]
= P_{0}^{1/\tilde{f}_{j}}[h(f_{j}(X_{\tau}))].
\end{eqnarray}

For any Borel set $\Gamma \subseteq \{\tau_{j} < \tau \}$ of $\tilde{\mathcal F}_{\tau_{j}}^{0}$,
the path decomposition theorem of the three-dimensional Bessel process according to Williams leads to
\begin{eqnarray}
\tilde{P}_{x,j}^{f}[h(f(X_{\tau})) : \Gamma]
&=&
P_{x}^{(1/2)}[h(f(X_{\tau}))\tilde{\mathcal{S}}_{\tau}^{f}(X)1_{\Gamma} | J_{0}^{*} = j]\nonumber\\
&=&
W_{x}\left[\left. P_{0}^{(1/2)}[h(f(X_{u}+j))\tilde{\mathcal{S}}_{u}^{f}(X+j)] \right|_{u = \tau-\tau_{j}} \mathcal{S}_{\tau_{j}}^{f}(X)  : \Gamma \right]\nonumber\\
&=&
\tilde{P}_{x,j}^{f}\left[\left. P_{0}^{(1/2)}[h(f(X_{u}+j))\tilde{\mathcal{S}}_{u}^{f}(X+j)] \right|_{u = \tau-\tau_{j}} : \Gamma \right].
\end{eqnarray}
Then, using the fact that $J_{\tau}$ is uniformly distributed on $[0,X_{\tau}]$ under $P_{x}^{(1/2)}$, we obtain the following on $\{\tau_{j} < \tau \}$:
\begin{eqnarray}
\tilde{P}_{x,j}^{f}[h(f(X_{\tau_{j}+\tau})) | \tilde{\mathcal F}_{\tau_{j}}^{0}]
= P_{0}^{(1/2)}[h(f_{j}(X_{\tau}))\tilde{\mathcal{S}}_{\tau}^{f_{j}}(X)]
= \int_{0}^{\infty} \frac{q_{j}^{h}(x)}{x} \theta_{j}(\tau,x) p^{(1/2)}(\tau,x) dx.
\end{eqnarray}
\end{proof}

Finally, we provide a sufficient condition for Assumption \ref{ass:smooth}.
\begin{lemma}
\label{lem:smooth}
Let $h$ be a nonnegative continuous function.
Then, Assumption \ref{ass:smooth} is satisfied
if $\tilde{H}(\cdot,f(x), f(j))$ is continuous and
$\lim_{t \rightarrow T}\tilde{H}(t,f(x), f(j)) = h(f(x))$
for $x \ge j >0$,
which is the case if
\begin{itemize}
\item[(1)] $f^{\prime} < 0$, or
\item[(2)] $h$ is of at most linear growth, $1/f$ satisfies Assumption \ref{ass:equivalence},
and $\tilde{P}_{x,j}^{1/f}[f(J_{\tau})^{2}] < \infty$ holds for all $\tau \le T$.
\end{itemize}
\end{lemma}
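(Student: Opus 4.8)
The plan is to establish the two implications in the statement separately: (I) that the asserted continuity of $\tilde{H}(\cdot,\cdot,k)$, together with the correct terminal value, already yields the $C^{1,2}$-regularity required in Assumption~\ref{ass:smooth}; and (II) that this continuity holds under hypothesis~(1) or hypothesis~(2).

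For (I), fix $k>0$, put $j=f^{-1}(k)$, and fix $(t^{*},y^{*})\in[0,T)\times I_{k}^{\prime}$. A small interval $[a,b]\subset I_{k}^{\prime}$ around $y^{*}$ pulls back under $f^{-1}$ to a compact subinterval of $(j,\infty)$, so under $\tilde{P}_{x,j}^{f}$ with $x=f^{-1}(y^{*})$ the process $X$ of (\ref{eq:x_pf_xj}) does not reach the reflecting level $j$ before leaving $f^{-1}([a,b])$, whence $J^{*}$ is frozen at $j$ and $X$ there is the non-degenerate diffusion $dX=d\tilde{\beta}-\tfrac{1}{2}T_{f}(X)\,dt$. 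The process $\tilde{H}(s,f(X_{s}),f(J_{s}^{*}))$ is a continuous $\tilde{P}_{x,j}^{f}$-martingale by the Markov property (exactly as in the proof of the first lemma following Theorem~\ref{them:main}, and without using Assumption~\ref{ass:smooth}), so optional stopping at the first exit of the space--time process from $Q:=(t^{*}-\varepsilon,T)\times(a,b)$ represents $\tilde{H}(t^{*},y^{*},k)$ as the expectation of the parabolic-boundary values of $\tilde{H}(\cdot,\cdot,k)$ on $Q$ (lateral values $\tilde{H}(\cdot,a,k)$ and $\tilde{H}(\cdot,b,k)$; terminal value $h$ on $[a,b]$); this optional stopping is legitimate because $0\le h(y)\le C(1+y)$ makes $\tilde{H}$ bounded on $\bar{Q}$. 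Under the hypothesis these boundary data are continuous and compatible at the corners $(T,a),(T,b)$, and $\sigma=f^{\prime}\circ f^{-1}\in C^{2}$ is non-vanishing on $[a,b]$; the classical theory of the linear parabolic Cauchy--Dirichlet problem for $v_{t}+\tfrac{1}{2}\sigma(y)^{2}v_{yy}=0$ then furnishes a unique bounded solution that is $C^{1,2}$ in the open cylinder and, by Feynman--Kac, coincides with $\tilde{H}(\cdot,\cdot,k)$ there; hence $\tilde{H}(\cdot,\cdot,k)$ is $C^{1,2}$ at $(t^{*},y^{*})$, and $(t^{*},y^{*})$ was arbitrary in $[0,T)\times I_{k}^{\prime}$.

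For (II), I reduce the continuity claim — in particular $\lim_{t\to T}\tilde{H}(t,f(x),f(j))=h(f(x))$ — to uniform $\tilde{P}_{x,j}^{f}$-integrability of $\{h(f(X_{T-t}))\}$ as $t\to T$: combined with $h(f(X_{T-t}))\to h(f(x))$ a.s. (path continuity of the Skorokhod solution of (\ref{eq:x_pf_xj}), which by the stability in Proposition~\ref{prop:skorokhod_sde} also depends continuously on $(x,j)$) and the bound $h(y)\le C(1+y)$, this yields joint continuity on $[0,T]\times\{x\ge j>0\}$ with terminal value $h(f(x))$. In case~(1), $f^{\prime}<0$ and $X_{s}\ge J_{s}^{*}\ge j$ give $f(X_{s})\in(0,f(j)]$, so $h(f(X_{s}))\le C(1+f(j))$ is bounded and the claim follows by bounded convergence. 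In case~(2) I use the M\"{o}bius covariance behind the Schwarzian process: writing $1/f=\varphi\circ f$ with $\varphi(z)=1/z$, property~(3) of the Schwarzian process gives $\mathcal{S}_{t}^{1/f}(X)=\tfrac{f(X_{t})}{f(X_{0})}\mathcal{S}_{t}^{f}(X)$, and the parallel computation for the enlarged-filtration version gives $\tilde{\mathcal{S}}_{t}^{1/f}(X)/\tilde{\mathcal{S}}_{t}^{f}(X)=\tilde{Y}_{t}/f(x)$ with $\tilde{Y}$ the full-reinvestment wealth process of (\ref{eq:fullreinvestment}). Since $1/f$ satisfies Assumption~\ref{ass:equivalence}, Lemma~\ref{lem:equivalence} applied to $1/f$ makes $\tilde{P}_{x,j}^{1/f}$ a genuine probability measure equivalent to $\tilde{P}_{x,j}^{f}$ on $\tilde{\mathcal{F}}_{T}^{0}$ with density $\tilde{Y}_{\tau}/f(x)$, hence $\tilde{P}_{x,j}^{f}[f(X_{\tau})]=\tfrac{f(x)}{f(j)^{2}}\,\tilde{P}_{x,j}^{1/f}[f(J_{\tau}^{*})^{2}]$, which is finite since $J_{\tau}\ge J_{\tau}^{*}$ and $\tilde{P}_{x,j}^{1/f}[f(J_{\tau})^{2}]<\infty$. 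The key point is that $f(J_{T-t}^{*})^{2}\le f(J_{\tau}^{*})^{2}\in L^{1}(\tilde{P}_{x,j}^{1/f})$ whenever $T-t\le\tau$, so dominated convergence and $J_{T-t}^{*}\downarrow j$ give $\tilde{P}_{x,j}^{f}[f(X_{T-t})]\to f(x)$; since $f(X_{T-t})\to f(x)$ a.s.\ as well, Scheff\'{e}'s lemma upgrades this to $L^{1}$-convergence, whence uniform integrability of $\{f(X_{T-t})\}$ and, via $h(y)\le C(1+y)$, of $\{h(f(X_{T-t}))\}$, as required.

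The main obstacle is the change-of-measure identity in case~(2): one has to recognize that $\tilde{Y}/f(x)$ is precisely the likelihood ratio between $\tilde{P}^{f}$ and $\tilde{P}^{1/f}$ — this is where the $f\leftrightarrow 1/f$ symmetry underlying the Schwarzian process pays off — and that it converts the a priori uncontrolled quantity $\tilde{P}_{x,j}^{f}[f(X_{\tau})]$ into $\tilde{P}_{x,j}^{1/f}[f(J_{\tau}^{*})^{2}]$, whose monotone pathwise domination in $T-t$ makes the terminal limit go through. Everything else — the corner compatibility of the boundary data in step~(I) and the local-in-$(x,j)$ uniformity of the integrability bounds needed for joint continuity in step~(II) — is routine given the linear growth of $h$ and Proposition~\ref{prop:skorokhod_sde}.
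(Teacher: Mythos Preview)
Your proof is correct and follows essentially the same route as the paper's. For part~(I) you localize to a small cylinder $[a,b]\subset I_k'$ and invoke the classical parabolic Cauchy--Dirichlet theory there, whereas the paper works on the full strip $(j,x_n)$ (in the $x$-variable) with boundary data $\Theta(T-t,j)$ at $x=j$ and $g(T-t,x_n)$ at $x=x_n$, citing Friedman's stochastic representation; both arguments identify $\tilde H(\cdot,\cdot,k)$ with the unique PDE solution via the same continuity hypothesis on the boundary data, so the difference is cosmetic. For part~(II), case~(1) is identical to the paper. In case~(2), both proofs rest on the same change of measure $\tilde P^{f}\leftrightarrow\tilde P^{1/f}$ with density $\frac{f(X_\tau)}{f(x)}\frac{f(j)^2}{f(J_\tau)^2}$; the paper then applies dominated convergence directly to the rewritten expectation $\tilde P_{x,j}^{1/f}\bigl[h(f(X_\tau))\frac{f(x)}{f(X_\tau)}\frac{f(J_\tau)^2}{f(j)^2}\bigr]$ (using $f(X_\tau)\ge f(j)$ and $f(J_\tau)^2\le f(J_T)^2$), whereas you take the slightly longer but equally valid detour through $\tilde P_{x,j}^{f}[f(X_\tau)]=\frac{f(x)}{f(j)^2}\tilde P_{x,j}^{1/f}[f(J_\tau^*)^2]$, Scheff\'e, and uniform integrability. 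Either finishing argument works; the paper's one-step DCT is a touch cleaner.
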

\begin{proof}
For a fixed $j > 0$, let $x_{n}$ be a sequence with $j < x_{n} \nearrow \infty$
and $w$ be a solution to the following initial-boundary value problem for each $n$:
\begin{eqnarray}
\left\{ \begin{array}{l}
\frac{1}{2}w_{xx}(t,x) - \frac{1}{2} T_{f}(x) w_{x}(t,x) + w_{t}(t,x) = 0, \\
w(T,x) = h(f(x)),\\
w(t,j) = \Theta(T-t,j),\\
w(t,x_{n}) = g(T-t,x_{n}),
\end{array} \right.
\end{eqnarray}
where $x \in (j,x_{n}), 0 < t \le T$,
$g:=g_{1}+g_{2}$, and
\begin{eqnarray}
g_{1}(\tau,x) &:=& P_{x}^{f} \left[h(f(X_{\tau})) : \tau < \tau_{j} \right], \\
g_{2}(\tau,x) &:=& P_{x}^{f} \left[\Theta(\tau-\tau_{j},j) : \tau > \tau_{j} \right].
\end{eqnarray}
Subsequently, because the initial-boundary value is continuous,
according to Theorem 5.2 in Chapter 6 of \citet{Friedman2012},
the solution is unique
and has a stochastic representation: 
\begin{eqnarray}
w(t,x)
&=&
P_{x}^{f} \left[h(f(X_{T-t})) : T-t < \tau_{j}^{n} \right]
+ P_{x}^{f} \left[\Theta((T-t)-\tau_{j},j) : \tau_{j} = \tau_{j}^{n} < T-t \right]\nonumber\\
&&+ P_{x}^{f} \left[g((T-t)-\tau_{j}^{n},X_{\tau_{j}^{n}}) : \tau_{j} \neq \tau_{j}^{n} < T-t \right],
\end{eqnarray}
where $\tau_{j}^{n} = \inf\{t : X_{t} = x_{n}\} \wedge \tau_{j}$
if $g$ and $\Theta$ are continuous in $t \in [0,T]$,
which is the case because $g(T-t,x)=\tilde{H}(t,f(x),f(j))$ and $\Theta(T-t,j)=\tilde{H}(t,f(j),f(j))$.
Straightforward computations show that $w(t,x)=\tilde{H}(t,f(x),f(j))$ is in $C^{1,2}([0,T) \times (j,\infty))$.

In the case of (1), $\tilde{P}_{x,j}^{f}[h(f(X_{\tau}))]$
is continuous in $\tau$ because $f(X)$ is bounded by $f(j)$ and $h$ is of at most linear growth.
In the case of (2), Assumption \ref{ass:equivalence} ensures that 
$\tilde{P}_{x}^{f}$ and $\tilde{P}_{x}^{1/f}$ are equivalent on $\tilde{\mathcal{F}}_{\tau}^{0}$.
Then, 
\begin{eqnarray}
\tilde{P}_{x,j}^{f}[h(f(X_{\tau}))]
= 
\tilde{P}_{x,j}^{1/f} \left[h(f(X_{\tau}))\frac{f(x)}{f(X_{\tau})}\frac{f(J_{\tau})^{2}}{f(j)^{2}} \right]
\end{eqnarray}
is continuous in $\tau$ owing to the dominated convergence theorem
with the integrability of $f(J_{\tau})^{2}$.
\end{proof}

\section{Black--Scholes equation in the presence of a bubble}
\label{sec:bse}
In this section, we consider derivative prices in the presence of a bubble. In this case,
a bubble is a discounted price process that is a nonnegative strict local martingale; hence, a supermartingale, under a risk-neutral measure.
The underlying process $Y$ that satisfies (\ref{eq:stock}) is a strict local martingale if and only if 
\begin{eqnarray}
\int_{1}^{\infty} \frac{y}{\sigma(y)^{2}} dy < \infty, \label{eq:bubble}
\end{eqnarray}
according to Theorem 1 of \citet{Kotani2006}.

In the presence of a bubble, the Black--Scholes equation (\ref{eq:pde}) admits multiple solutions for $h$ of linear growth,
among which the stochastic solution $v^{h}(\tau,y):=H(T-\tau,y)$ is characterized as the smallest solution.
A typical and an important example is the case of the forward contract; that is, $h(y)=y$,
in which case we express the stochastic solution using $v^{*}$ instead of $v^{h}$
for special emphasis.
The stochastic solutions $v^{*}$ and $v(\tau,y)=y$
are two distinct solutions owing to the supermartingale property of $Y$:
\begin{eqnarray}
v^{*}(\tau,y) = P_{f^{-1}(y)}^{f}[f(X_{\tau})] < y = v(\tau,y).
\end{eqnarray}

Several attempts have been made in previous research to recover the uniqueness and calculate the stochastic solution numerically (see \citet{Tysk2009}, \citet{EkstromLotstedtSydowTysk2011}, \citet{SongYang2015}, \citet{Cetin2018}, and \citet{tsuzuki2024boundary}).
We add a new numerical procedure to this list:
the derivative price is approximated from below by those for a fundraiser if $h$ is nondecreasing (Theorem \ref{them:main}).
See \citet{tsuzuki2024boundary} for numerical experiments of these procedures,
which showed that the method proposed by \citet{tsuzuki2024boundary} outperformed the others.

Hereafter in this section, we assume Assumption \ref{ass:boundary} below,
under which $0$ is an entrance boundary for the diffusion $X$ under $P_{\cdot}^{f}$
according to the conditions $f(0)=\infty$ and (\ref{eq:entrance}),
and $\infty$ is not according to (1) of Assumption \ref{ass:f}.
Then, we can extend $X$ to a continuous Feller process on $[0,\infty)$ using Theorem 33.13 of \citet{Kallenberg2021},
and define $P_{0}^{f}$ as the law of $X$ with $X_{0}=0$.
In particular, $P_{0}^{f}[X_{t}>0]=1$ for $t > 0$ holds.
We remark that the condition (\ref{eq:bubble}) on $\sigma$ for the presence of a bubble
is equivalent to the condition (\ref{eq:entrance}) on $f$.
\begin{assumption}
\label{ass:boundary}
Suppose that $f$ satisfies (1) of Assumption \ref{ass:f}, $f(0) = \infty$ and,
\begin{eqnarray}
\int_{0}^{1} f(\xi)m_{f}(d\xi) < \infty, \label{eq:entrance}
\end{eqnarray}
where $m_{f}(dx):=\frac{2}{|f^{\prime}(x)|}dx$.
\end{assumption}

We show that the boundary condition $\Theta_{j}^{h}$ also converges to the stochastic solution with the underlying price infinity, namely $\Theta_{0}^{h}(\tau) := P_{0}^{f}[h(f(X_{\tau}))]=\lim_{j\downarrow 0} \Theta_{j}^{h}(\tau)$,
derive expressions for $\Theta_{j}^{h}$,
and obtain the boundary condition of \citet{tsuzuki2024boundary}
\begin{eqnarray}
v^{h}(\tau,\infty)
= -2 \int_{0}^{\infty} v_{\tau}^{*}(\tau,y) \frac{h(y)}{\sigma(y)^{2}} dy
= -\int_{0}^{\infty} h(y)v_{yy}^{*}(\tau,y)dy, \label{eq:boundary}
\end{eqnarray}
which ensures that the Black--Scholes equation (\ref{eq:pde}) has a unique solution.
The main theorem in this section is as follows:
\begin{theorem}
\label{thm:boundary}
Under Assumption \ref{ass:boundary},
for a continuous, nonnegative function $h$ and $\tau >0$, we obtain
\begin{eqnarray}
\Theta_{j}^{h}(\tau) &=& \int_{j}^{\infty} \left(1-\frac{f(x)}{f(j)}\right) \frac{d}{d \tau} P_{x}^{1/f}[\tau_{j} \le \tau] \frac{h(f(x))}{f(x)} m_{1/f}(dx), \label{eq:Theta_j}\\
\Theta_{0}^{h}(\tau) &=& \int_{0}^{\infty}\frac{d}{d \tau} P_{x}^{1/f}[\tau_{0} \le \tau] \frac{h(f(x))}{f(x)} m_{1/f}(dx).\label{eq:Theta_0}
\end{eqnarray}
Furthermore, we obtain $\Theta_{j}^{h}(\tau) \nearrow \Theta_{0}^{h}(\tau) = v^{h}(\tau,\infty)$ as $j\downarrow 0$ for $\tau >0$ if $h$ is nondecreasing.
\end{theorem}

We remark that $P_{x}^{1/f}[\tau_{j} > \tau]$ for each $j \ge 0$, which is the complementary cumulative distribution function of the first hitting time for $X$ to reach $j$ with respect to $P_{x}^{1/f}$,
is the solution to
\begin{eqnarray}
\left\{ \begin{array}{l}
u_{\tau} = \frac{1}{2}u_{xx}-\frac{1}{2}T_{1/f} u_{x}, \\
u(0,x) = 1,\\
u(\tau,j)=0,\label{eq:comp_distribution}
\end{array} \right.
\end{eqnarray}
for $(\tau,x) \in (0,\infty) \times (j,\infty)$.
See Sections 4 and 5 of \citet{KaratzasRuf2013} for the analytical properties of the solutions.
In particular, the first hitting time $\tau_{j}$ has a density function
that is continuous in $(\tau,x)$.

Once (\ref{eq:Theta_j}) and (\ref{eq:Theta_0}) have been established,
we can obtain (\ref{eq:boundary}) by 
\begin{eqnarray}
v^{*}(\tau,f(x))
= P_{x}^{f} \left[f(X_{\tau})\right]
= f(x)P_{x}^{1/f}[\tau < \tau_{0}],
\end{eqnarray}
and
we obtain the expression (\ref{eq:koexpression}) of $\Theta_{j}^{h}$ below in terms of the price $v^{o,j}$ of a knock-out forward contract
with barrier level $f(j)$:
\begin{eqnarray}
v^{o,j}(\tau,y) := P_{f^{-1}(y)}^{f} [f(X_{\tau}):\tau < \tau_{j}]
= y P_{f^{-1}(y)}^{1/f} [\tau < \tau_{j}]  .\label{eq:ko}
\end{eqnarray}
We also remark that $v^{o,j}$ is decreasing in $\tau$, and hence, $v_{\tau}^{o,j} \le 0$.
\begin{proposition}
\label{prop:average_ko}
Let $\tau > 0$ and $\lambda_{j}(y):=\left(1-\frac{y}{f(j)}\right)h(y)$ for $j>0$.
If $\lambda_{j}$ is in $C^{2}([0,f(j)])$
and satisfies the following vanishing conditions at $y=0,f(j)$:
\begin{eqnarray}
\lambda_{j}(y)v_{y}^{o,j}(\tau,y)=\lambda_{j}^{\prime}(y)v^{o,j}(\tau,y)=0,
\end{eqnarray}
we obtain 
\begin{eqnarray}
\Theta_{j}^{h}(\tau) = - \int_{0}^{f(j)} \lambda_{j}^{\prime\prime}(y) v^{o,j}(\tau,y) dy.\label{eq:koexpression}
\end{eqnarray}
\end{proposition}
\begin{proof}
According to (\ref{eq:Theta_j}) and (\ref{eq:ko}), we obtain
\begin{eqnarray}
\Theta_{j}^{h}(\tau)
= -\int_{j}^{\infty} \lambda_{j}(f(x)) v_{\tau}^{o,j}(\tau,f(x)) m_{f}(dx)
=-\int_{0}^{f(j)} \lambda_{j}(y) v_{\tau}^{o,j}(\tau,y) \frac{2dy}{\sigma^{2}(y)}
\end{eqnarray}
by changing the variable $y=f(x)$.
For $\lambda_{j} \in C^{2}([0,f(j)])$ with the vanishing conditions, we obtain 
\begin{eqnarray}
\Theta_{j}^{h}(\tau)
= -\int_{0}^{f(j)} \lambda_{j}(y) v_{yy}^{o,j}(\tau,y)  dy
= -\int_{0}^{f(j)} \lambda_{j}^{\prime\prime}(y) v^{o,j}(\tau,y) dy.
\end{eqnarray}
\end{proof}  

\subsection{Convergence}
In this section, we show that $\Theta_{0}^{h}(\tau)=\lim_{j \downarrow 0}\Theta_{j}^{h}(\tau)$ for each $\tau>0$ if $h$ is nondecreasing.
As mentioned previously, 
$\{P_{x}^{f}\}_{x\ge 0}$ is the family of the law of a continuous Feller process on $[0,\infty)$,
which leads to the convergence of laws $P_{x}^{f}$ to $P_{0}^{f}$ as $x \downarrow 0$.
This convergence can be realized in the pathwise sense.
\begin{lemma}
\label{lem:feller}
For $x > 0$,
let $\xi^{(x)}$ be the stochastic process on the probability space $(\Omega,\mathcal{F},W_{0})$ that is defined by the stochastic differential equation
\begin{eqnarray}
\xi_{t}^{(x)} = x+X_{t} -\frac{1}{2} \int_{0}^{t}T_{f}(\xi_{u}^{(x)}) du.\label{eq:transient}
\end{eqnarray}
Then, $\xi^{(x)}$ converges decreasingly and uniformly in any compact subset in
$[0,\infty)$ as $x \downarrow 0$, $W_{0}$ almost surely.
\end{lemma}
\begin{proof}
For each $\omega \in \Omega$ and $t \ge 0$, 
we define $\xi_{t}(\omega)$ as $\xi_{t}^{(x)}(\omega) \downarrow \xi_{t}(\omega)$ as $x\downarrow 0$.
For each $\delta >0$, the random variable $\xi_{\delta}^{(x)}$ converges to $\xi_{\delta}$, the law of which is $P_{0}^{f}\circ X_{\delta}^{-1}$.
In particular, $W_{0}[\xi_{\delta}>0]=P_{0}^{f}[X_{\delta}>0]=1$ holds.
For a random variable $a > 0$, we can define $\eta^{(a)}$ as the solution to
\begin{eqnarray}
\eta_{t}^{(a)} = a+(X_{t+\delta}-X_{\delta}) -\frac{1}{2} \int_{0}^{t}T_{f}(\eta_{u}^{(a)}) du.
\end{eqnarray}
According to the comparison theorem, we obtain
\begin{eqnarray}
W_{0} \left[\xi_{t+\delta}^{(x)} = \eta_{t}^{(\xi_{\delta}^{(x)})} \ge \eta_{t}^{(\xi_{\delta})}, t > 0 \right]=1
\end{eqnarray}
for each $x>0$.
Because  
$W_{0} \left[\inf_{t \le T} \eta_{t}^{(\xi_{\delta})} > 0 \right]=1$ holds for each $T>0$,
and $T_{f}$ is locally Lipschitz in $(0,\infty)$,
the convergence of $\xi_{t+\delta}^{(x)}$ as $x \downarrow 0$ is uniform in $t \in [0,T]$, $W_{0}$ almost surely.
This implies that $\xi$ is continuous in $(0,\infty)$.
The continuity of $\xi$ at $t=0$ follows from $\xi_{0+} \le \lim_{t\downarrow 0} \xi_{t}^{(x)}=x$ for any $x>0$.
\end{proof}
We remark that the stochastic differential equation (\ref{eq:transient}) with $x=0$ is singular at $x=0$
and does not necessarily have a strong solution (see Section 2.4 of \citet{ChernyEngelbert2005}).

\begin{lemma}
\label{lem:tightness}
The laws $P_{0}^{1/\tilde{f}_{j}}$ converge to $P_{0}^{f}$ as $j \downarrow 0$.
\end{lemma}
\begin{proof}
We use the same notation as that in the proof of Lemma \ref{lem:feller},
and define $X^{(j)}, j >0$ as the strong solution to (\ref{eq:singular}), with $\gamma$ replaced with $X$.
Let $(\mu_{j})_{j > 0}$ be an increasing sequence of functions that are defined as
\begin{eqnarray}
\mu_{j}(x)
:= \left\{ \begin{array}{ll}
\infty, & x \le j \\
-\frac{1}{2}T_{f}(x) + \frac{-f^{\prime}(x)}{f(j)-f(x)}, & x > j,
\end{array} \right.
\end{eqnarray}
and $\chi^{(j)} :=X^{(j)}+j$ for $j >0$.
Then, $\chi^{(j)}$ satisfies
\begin{eqnarray}
\chi_{t}^{(j)} = j+X_{t} + \int_{0}^{t} \mu_{j}(\chi_{u}^{(j)}) du. \label{eq:sde_xi}
\end{eqnarray}
We define $\chi$ as $\chi_{t}^{(j)}(\omega) \downarrow \chi_{t}(\omega)$ as $j \downarrow 0$.

When comparing $\chi_{0}^{(j)}$ and $\xi_{0}^{(x)}$, $\mu_{j}$ and $-\frac{1}{2}T_{f}$,
the comparison theorem implies that $\chi_{t}^{(j)} \ge \xi_{t}$,
and subsequently, that $W_{0}[\chi_{t} \ge \xi_{t}, t \ge 0]=1$.
Hence, $\inf_{j>0,t\in [\delta,1/\delta]} \chi_{t}^{(j)}(\omega) > 0$ for $\delta \in (0,1)$
implies that $\chi$ is a continuous process that satisfies 
\begin{eqnarray}
\chi_{t}-\chi_{\delta} = X_{t}-X_{\delta} -\frac{1}{2} \int_{\delta}^{t} T_{f}(\chi_{u}) du,\; t \ge \delta.
\end{eqnarray}
It follows for $x > 0$ that 
$\chi_{t} \le \xi_{t}^{(x)}$ for $t \ge \delta$ on $\{\chi_{\delta} \le \xi_{\delta}^{(x)}\}$.
By letting $\delta \downarrow 0$, and then letting $x \downarrow 0$, we conclude that $W_{0}[\xi_{t} \ge \chi_{t}, t \ge 0]=1$.
Dini's theorem leads to the compact uniform convergence of $\chi^{(j)}$ to $\xi$.
\end{proof}

\begin{proposition}
\label{prop:convergence}
For each $\tau>0$, the sequence $\Theta_{j}^{h}(\tau)$ increasingly converges to $\Theta_{0}^{h}(\tau)$ as $j\downarrow 0$ if $h$ is nondecreasing.
\end{proposition}
\begin{proof}
We use the same notation as that in the proof of Lemma \ref{lem:tightness}.
From (\ref{eq:boundary_value}), we obtain
\begin{eqnarray}
\Theta_{j}^{h}(\tau) = P_{0}^{\tilde{f}_{j}}[h(f_{j}(X_{\tau}))]
= W_{0}[h(f_{j}(X_{\tau}^{(j)}))]
= W_{0}[h(f(\chi_{\tau}^{(j)}))],
\end{eqnarray}
which shows the monotonicity of $\Theta_{j}^{h}(\tau)$ because $(\chi_{\tau}^{(j)})_{j \ge 0}$ is a nondecreasing sequence and $h\circ f$ is nonincreasing.
We note the following for the convergence:
\begin{eqnarray}
W_{0}\left[h(f(\chi_{\tau}^{(j)})) \wedge K \right]
\le \Theta_{j}^{h}(\tau)
\le W_{0}[h(f(\chi_{\tau}))]
= P_{0}^{f}[h(f(X_{\tau}))]
\end{eqnarray}
holds for all $K>0$.
The proof is completed by letting $j\downarrow 0$, and then letting $K\uparrow \infty$.
\end{proof}

\subsection{Expression of $\Theta_{j}^{h}$}
Using the expression of (\ref{eq:boundary_value}),
we establish Lemma \ref{lem:key} below,
and derive (\ref{eq:Theta_j}) and (\ref{eq:Theta_0}).
\begin{lemma}
\label{lem:key}
For $k,\lambda \in C_{K}((0,\infty))$, we obtain
\begin{eqnarray}
\int_{0}^{\infty} k(\tau) P_{0}^{1/\tilde{f}_{j}}\left[\lambda(X_{\tau}) h(f_{j}(X_{\tau})) \right] d\tau
= \int_{0}^{\infty} \left(1-\frac{f_{j}(x)}{f_{j}(0)}\right) \lambda(x) 
P_{x}^{1/f_{j}}\left[k(\tau_{0})\right] \frac{h(f_{j}(x))}{f_{j}(x)} m_{1/f_{j}}(dx).\label{eq:key}
\end{eqnarray}
\end{lemma}
\begin{proof}
We define $\Lambda_{j}(\tau,\lambda), K_{j}(x,k)$, and $\Psi_{j}(k,\lambda)$ as follows:
\begin{eqnarray}
\Lambda_{j}(\tau,\lambda)
&:=& \int_{0}^{\infty} \lambda(x) \frac{q_{j}^{h}(x)}{x}\theta_{j}(\tau,x) p^{(1/2)}(\tau,x)dx,\\
K_{j}(x,k) &:=& 2\int_{0}^{\infty} k(\tau) \theta_{j}(\tau,x) \frac{p^{(1/2)}(\tau,x)}{2x} d\tau, \\
\Psi_{j}(k,\lambda)
&:=& \int_{0}^{\infty}\int_{0}^{\infty} k(\tau)\lambda(x) \frac{q_{j}^{h}(x)}{x} \theta_{j}(\tau,x) p^{(1/2)}(\tau,x) dxd\tau.
\end{eqnarray}
Then, we obtain $\Lambda_{j}(\tau,\lambda)=P_{0}^{1/\tilde{f}_{j}}\left[\lambda(X_{\tau}) h(f_{j}(X_{\tau})) \right]$ using (\ref{eq:boundary_value}),
and the left-hand side of (\ref{eq:key}) is equal to
\begin{eqnarray}
\Psi_{j}(k,\lambda)
= \int_{0}^{\infty} \Lambda_{j}(\tau,\lambda) k(\tau) d\tau
=\int_{0}^{\infty} \lambda(x)q_{j}^{h}(x) K_{j}(x,k) dx.
\end{eqnarray}
The time reversal of Williams (Corollary 4.6 of VII and Exercise 3.12 of XI in \citet{revuzyor}) yields
\begin{eqnarray}
\theta_{j}(\tau,x) = P_{0}^{(1/2)}\left[\mathrm{e}^{\frac{1}{4}\int_{0}^{\tau}S_{f_{j}}(X_{u})du} \;\middle| \; X_{\tau}=x \right]
= W_{x}\left[\mathrm{e}^{\frac{1}{4}\int_{0}^{\tau}S_{f_{j}}(X_{u})du} \;\middle| \; \tau_{0}=\tau \right],
\end{eqnarray}
and the fact that the density of $\tau_{0}$ under $W_{x}$ is $p^{(1/2)}(\cdot,x)/2x$
yields
\begin{eqnarray}
K_{j}(x,k)
= 2\sqrt{\frac{(1/f_{j})^{\prime}(0)}{(1/f_{j})^{\prime}(x)}}W_{x}\left[k(\tau_{0}) \mathcal{S}_{\tau_{0}}^{1/f_{j}}(X)\right].
\end{eqnarray}
The proof is completed as follows:
\begin{eqnarray}
W_{x}\left[k(\tau_{0}) \mathcal{S}_{\tau_{0}}^{1/f_{j}}(X)\right]
&=& \lim_{n\rightarrow \infty} W_{x}\left[k(\tau_{0}) \mathcal{S}_{\tau_{0}}^{1/f_{j}}(X) : \tau_{0} \le \tau_{n}\right] \nonumber\\
&=& \lim_{n\rightarrow \infty} P_{x}^{1/f_{j}}\left[k(\tau_{0}) : \tau_{0} \le \tau_{n}\right] \nonumber\\
&=& P_{x}^{1/f_{j}}\left[k(\tau_{0}) : \tau_{0} < \infty \right] \nonumber\\
&=& P_{x}^{1/f_{j}}\left[k(\tau_{0})\right],
\end{eqnarray}
where the fact that $0$ is an exit boundary and $\infty$ is an inaccessible one for $X$ under $P_{x}^{1/f_{j}}$
is used.
\end{proof}  

\begin{proposition}
\label{prop:expression}
For $\tau>0$, we obtain (\ref{eq:Theta_j}) and (\ref{eq:Theta_0}).
\end{proposition}
\begin{proof}
We use the same notation as that in the proof of Lemma \ref{lem:key}.
For each $j>0$, Lemma \ref{lem:key} shows that
\begin{eqnarray}
\Psi_{j}(k,\lambda)
&=& \int_{0}^{\infty} \left(1-\frac{f(x+j)}{f(j)}\right) \lambda(x) 
P_{x}^{1/f_{j}}\left[k(\tau_{0})\right] \frac{h(f(x+j))}{f(x+j)} m_{1/f_{j}}(dx) \nonumber\\
&=& \int_{j}^{\infty} \left(1-\frac{f(x)}{f(j)}\right)\lambda(x-j) 
P_{x}^{1/f}\left[k(\tau_{j})\right] \frac{h(f(x))}{f(x)}m_{1/f}(dx),
\end{eqnarray}
in which we use the fact that the law of $X+j$ under $P_{x-j}^{1/f_{j}}$ is $P_{x}^{1/f}$.
The expression (\ref{eq:Theta_j}) follows by disintegration with respect to the variable of $k$.

Next, we derive (\ref{eq:Theta_0}) by letting $j \downarrow 0$.
First, we obtain
\begin{eqnarray}
\lim_{j \downarrow 0}\Psi_{j}(k,\lambda)  
&=&  \int_{0}^{\infty} \lambda(x) P_{x}^{1/f}\left[k(T_{0})\right] \frac{h(f(x))}{f(x)} m_{1/f}(dx) \nonumber\\
&=& \int_{0}^{\infty} \left(\int_{0}^{\infty} \lambda(x)\psi_{0}(\tau,x) \frac{h(f(x))}{f(x)} m_{1/f}(dx) \right) k(\tau) d\tau,
\end{eqnarray}
where $\psi_{0}(\cdot,x)$ is the density of $\tau_{0}$ under $P_{x}^{1/f}$.
Second, 
we obtain
\begin{eqnarray}
\lim_{j \downarrow 0}\Psi_{j}(k,\lambda)
= \lim_{j \downarrow 0} \int_{0}^{\infty} \Lambda_{j}(\tau,\lambda) k(\tau) d\tau 
= \int_{0}^{\infty} P_{0}^{f}[\lambda(X_{\tau})h(f(X_{\tau}))] k(\tau) d\tau
\end{eqnarray}
because 
$\Lambda_{j}(\tau,\lambda)$ is bounded
and Lemma \ref{lem:tightness} implies the following:
\begin{eqnarray}
\lim_{j \downarrow 0} \Lambda_{j}(\tau,\lambda) = P_{0}^{f}[\lambda(X_{\tau})h(f(X_{\tau}))].
\end{eqnarray}
Therefore, we obtain the following for $\lambda \in C_{K}((0,\infty))$ by disintegration:
\begin{eqnarray}
\int_{0}^{\infty} \lambda(x)\psi_{0}(\tau,x) \frac{h(f(x))}{f(x)} m_{1/f}(dx)
= P_{0}^{f}[\lambda(X_{\tau})h(f(X_{\tau}))]
\end{eqnarray}
for all $\tau$ owing to the continuity of $\psi_{0}$ and that of the right-hand side of the above equation with respect to $\tau$.
The conclusion follows by taking $\lambda \uparrow 1$.
\end{proof}  

Finally, we provide several expressions of $v^{*}(\tau,\infty)$.
Let $s_{\tau}(1/f(x),\cdot)$ be the density of $\inf_{u \le \tau}1/f(X_{u})$ with respect to $P_{x}^{1/f}$, restricted to the interval $(0,1/f(x))$,
and $v^{i,j}$ be the price of a knock-in forward contract with barrier level $f(j)$:
\begin{eqnarray}
v^{i,j}(\tau,y) := P_{f^{-1}(y)}^{f} [f(X_{\tau}):T_{j} \le \tau].
\end{eqnarray}
The forward price $v^{*}(\tau,\infty)$ can be expressed in terms of the density $s_{\tau}$ as well as the averages of the knock-in and knock-out forward prices.

\begin{proposition}
\label{prop:finite}
Suppose that $s_{\tau}(0,0) := \lim_{a,\eta\downarrow 0}s_{\tau}(a,\eta)$ exists in $[0,\infty]$ for $\tau > 0$, where $(a,\eta)$ approaches $(0,0)$ in any direction with $0 < a < \eta$, and the convergence is uniform.
Then, we obtain
\begin{eqnarray}
s_{\tau}(0,0)
= 2\lim_{j \downarrow 0}\int_{0}^{f(j)}v^{i,j}(\tau,y)\frac{dy}{f(j)}
= 2\lim_{j \downarrow 0}\int_{0}^{f(j)}v^{o,j}(\tau,y)\frac{dy}{f(j)}
= v^{*}(\tau,\infty).
\end{eqnarray}
\end{proposition}
\begin{proof}
The prices of knock-in and knock-out forward contracts are expressed as
\begin{eqnarray}
&&v^{i,j}(\tau,f(x))
= f(x) P_{x}^{1/f} [\tau_{j} \le \tau < \tau_{0}]
= f(x)\int_{0}^{1/f(j)} s_{\tau}(1/f(x),\eta)d\eta,\\
&&v^{o,j}(\tau,f(x))
= f(x) P_{x}^{1/f} [\tau < \tau_{j}]
= f(x)\int_{1/f(j)}^{\infty} s_{\tau}(1/f(x),\eta)d\eta.
\end{eqnarray}
The integrals of the proposition are
\begin{eqnarray}
2\int_{0}^{f(j)}v^{i,j}(\tau,y)\frac{dy}{f(j)}
&=& 2 \int_{0}^{f(j)} y \left(\int_{0}^{1/f(j)}s_{\tau}(1/y,\eta)d\eta\right) \frac{dy}{f(j)} \nonumber\\
&=& \int_{1/f(j)}^{\infty} \left(\int_{0}^{1/f(j)}s_{\tau}(y,\eta)\frac{d\eta}{1/f(j)} \right)\frac{2}{f(j)^{2}}\frac{dy}{y^{3}},
\end{eqnarray}
and
\begin{eqnarray}
2\int_{0}^{f(j)}v^{o,j}(\tau,y)\frac{dy}{f(j)}
= \int_{1/f(j)}^{\infty} \left(\int_{1/f(j)}^{y}s_{\tau}(y,\eta)\frac{d\eta}{y-1/f(j)}\right) \frac{y-1/f(j)}{f(j)} \frac{2dy}{y^{3}},
\end{eqnarray}
both of which tend towards $s_{\tau}(0,0)$ as $j$ approaches $0$.
Proposition \ref{prop:average_ko} with $h(y)=y$
and the convergence $\Theta_{j}(\tau) \rightarrow \Theta_{0}(\tau)$ lead to $v^{*}(\tau,\infty) = s_{\tau}(0,0)$.
\end{proof}

\section{Examples}
\label{sec:example}
In this section, we consider the cases of $s(x)=1/x$ and $f(x)=x^{-2\nu}$ for $\nu = \pm 1/2$.
The coordinate process $X$ is a three-dimensional Bessel process under $P_{x}^{s}$,
and the Schwarzian derivatives $S_{s}$ and $S_{f}$ are $0$.
Under the physical measure, the stock price is a three-dimensional Bessel process for $\nu=-1/2$,
and it is the reciprocal of a three-dimensional Bessel process for $\nu=1/2$.
We only consider European options with $h(x) = 1$ and $h(x)=x$; that is, a zero-coupon bond and a forward contract, respectively.
The price of a zero-coupon bond for the fundraiser is always $1$
because an equivalent local martingale measure exists for the fundraiser in the hypothetical market.

\subsection{Three-dimensional Bessel process: $\nu=-1/2$}
\label{sec:-1/2}
Arbitrage opportunities exist in this case.
The process $X$, which is also the stock price, is a Brownian motion that stops when it reaches $0$ under $P_{x}^{f}$
and a three-dimensional Bessel process under $\tilde{P}_{x}^{f}$,
and the GOPs are $G_{t} = X_{t}/x$ and $\tilde{G}_{t} = 1$.
Under $\tilde{P}_{x,j}^{f}$,
the stock price $X$ is a Brownian motion up to $\tau_{j}$
and $X_{\cdot + \tau_{j}}-j$ is a three-dimensional Bessel process that starts at $0$, independent of $\tilde{\mathcal{F}}_{\tau_{j}}^{0}$.
As mentioned after Lemma \ref{lem:comp},
the pathwise comparison results do not hold for $x$.
More precisely, let $X^{x,j}$ be defined by (\ref{eq:x_pf_xj}) with $X_{0}=x,J_{0} = j$.
Then, $X_{t}^{x,j} \le X_{t}^{x^{\prime},j}$ does not hold for all $t$, even if $x < x^{\prime}$,
which follows by considering the situation after $X^{x,j}$ reaches the level $j$
because $X^{x,j} > j$ always holds, whereas $X^{x^{\prime},j}$ eventually reaches $j$.

The price of a zero-coupon bond for the investor is calculated by
\begin{eqnarray}
W_{x}[X_{T} > 0] = N\left(\frac{x}{\sqrt{T}}\right)-N\left(-\frac{x}{\sqrt{T}}\right),
\end{eqnarray}
where $N$ is the distribution function of the standard normal distribution.
This is strictly less than $1$, which indicates an arbitrage.
The prices of the forward contract for each agent are 
\begin{eqnarray}
W_{x}[X_{T \wedge \tau_{0}} :X_{T} > 0] &=& x,\label{eq:f_bm}\\
\tilde{P}_{x,j}^{f}[X_{T}]
&=& x + 4\sqrt{T} \left(N^{\prime}\left(\frac{j-x}{\sqrt{T}}\right)
+\frac{j-x}{\sqrt{T}}N\left(\frac{j-x}{\sqrt{T}}\right) \right). \label{eq:f2_bm} 
\end{eqnarray}
Note that the delta of the forward contract for the fundraiser up to $\tau_{j}$;
that is, the derivative of (\ref{eq:f2_bm}) with respect to $x$, is 
$1-4N\left((j-x)/\sqrt{T}\right)$,
which takes values in $[-1,1)$ (in particular, $-1$ when $x=j$).
It is noteworthy that
the right-hand side of (\ref{eq:f2_bm}) converges to (\ref{eq:f_bm}) when $j \rightarrow -\infty$,
although we have defined $j$ in $[0,\infty)$.

\subsection{Reciprocal of three-dimensional Bessel process: $\nu=1/2$}
\label{sec:1/2}
This is a typical example of the bubble model.
The process $X$ is a three-dimensional Bessel process under $P_{x}^{f}$,
the stock price $Y$ is its reciprocal,
and the GOPs are $G_{t}=1$ and $\tilde{G}_{t} = x/J_{0}^{2} \cdot J_{t}^{2}/X_{t}$.
Although the equivalence between $\tilde{P}_{x}^{f}$ and $P_{x}^{(1/2)}$ on $\mathcal{\tilde{F}}_{T}^{0}$
follows from Lemma \ref{lem:equivalence},
it also follows from the fact that $1/\tilde{G}$ is a positive $P_{x}^{s}$ martingale because
\begin{eqnarray}
P_{x}^{(1/2)}\left[\sup_{t \le T} \left(\frac{1}{\tilde{G}_{T}}\right)^{2}\right]
\le
P_{x}^{(1/2)}\left[\sup_{t \le T} \frac{X_{t}^{2}}{x^{2}}\right]
\le
\frac{3}{x^{2}} W_{x}\left[\sup_{t \le T} X_{t}^{2} \right]
< \infty.
\end{eqnarray}
However, $\tilde{P}_{x}^{f}$ and $P_{x}^{(1/2)}$ are not equivalent on $\mathcal{\tilde{F}}_{\infty}^{0}$;
if they were, we would have $\tilde{P}_{x}^{f}[\Lambda_{t}^{*} < \infty] = 1$ for all $t \ge 0$.
Furthermore,
owing to
$\lim_{t \rightarrow \infty}X_{t}=\lim_{t \rightarrow \infty}J_{t}=\infty$,
we would have
\begin{eqnarray}
\lim_{t \rightarrow \infty} \tilde{G}_{t}
= \lim_{t \rightarrow \infty} \frac{x}{J_{0}^{2}}\frac{J_{t}^{2}}{X_{t}}
=
\lim_{t \rightarrow \infty} \frac{x}{J_{0}^{2}}J_{\Lambda_{t}^{*}}
= \infty,
\end{eqnarray}
$\tilde{P}_{x}^{f}$-a.s.,
which contradicts the fact that a continuous nonnegative supermartingale converges.

The price of a zero-coupon bond is $1$ for both agents because an equivalent local martingale measure exists.
The prices of a forward contract are calculated by
\begin{eqnarray}
P_{x}^{(1/2)}\left[\frac{1}{X_{T}}\right]
= \frac{1}{x} \left(N\left(\frac{x}{\sqrt{T}}\right)-N\left(-\frac{x}{\sqrt{T}}\right)\right) \label{eq:f1_bessel}
\end{eqnarray}
and
\begin{eqnarray}
\tilde{P}_{x,j}^{(1/2)}\left[\frac{1}{X_{T}}\right]
&=&
\frac{1}{x} \left(N \left(\frac{x-j}{\sqrt{T}}\right)-N \left(-\frac{x-j}{\sqrt{T}}\right)\right) \nonumber\\
&+&
\frac{2}{x} \int_{\frac{x-j}{\sqrt{T}}}^{\infty}
\left(\frac{j}{r\sqrt{T}-x+2j}\right)^{2}
\mathrm{e}^{-r^{2}/2} \frac{dr}{\sqrt{2\pi}}. \label{eq:f2_bessel}
\end{eqnarray}
As indicated in Theorem \ref{them:main} and Lemma \ref{lem:convergence},
the first term of the right-hand side of (\ref{eq:f2_bessel})
converges to (\ref{eq:f1_bessel}),
whereas the second term approaches $0$ as $j \rightarrow 0$.



\end{document}